\newtheorem{thm}{Theorem}[section]
\newtheorem{lem}[thm]{Lemma}
\newtheorem{proposition}[thm]{Proposition}
\def\max{\mathop{\rm max}}
\begin{document}
\setlength{\textheight}{575pt}
\setlength{\baselineskip}{23pt}

\title{Integrative Analysis of Prognosis Data on Multiple Cancer Subtypes using Penalization}
\author[1]{Jin Liu}
\affil[1]{School of Public Health, Yale University}
\author[2]{Jian Huang}
\affil[2]{Department of Statistics $\&$ Actuarial Science, Department of Biostatistics, University of Iowa}
\author[1]{Yawei Zhang}
\author[3]{Qing Lan}
\affil[3]{Division of Cancer Epidemiology and Genetics, National Cancer Institute, NIH}
\author[3]{Nathaniel Rothman}
\author[1]{Tongzhang Zheng}
\author[1]{Shuangge Ma}

\maketitle
\def\ep{\varepsilon}
\def\ba{{\boldsymbol a}}
\def\bA{{\boldsymbol A}}
\def\cA{{\cal A}}
\def\hA{\widehat{A}}
\def\tba{\widetilde{\boldsymbol a}}
\def\bb{{\boldsymbol b}}
\def\hb{\hat{b}}
\def\hbb{\hat{\boldsymbol b}}
\def\tbb{\tilde{\boldsymbol b}}
\def\barb{\bar{b}}
\def\bbb{\overline{\bb}}
\def\Bbar{\overline{B}}
\def\bB{{\boldsymbol B}}
\def\cB{{\cal B}}
\def\hcB{{\widehat{\cal B}}}
\def\bC{{\boldsymbol C}}
\def\bD{{\boldsymbol D}}
\def\bd{{\boldsymbol d}}
\def\be{{\boldsymbol e}}
\def\cE{{\cal E}}
\def\rE{{\mathrm E}}
\def\bff{{\boldsymbol f}}
\def\hf{\widehat{f}}
\def\cF{{\cal F}}
\def\bg{{\boldsymbol g}}
\def\bG{{\boldsymbol G}}
\def\cG{{\cal G}}
\def\bh{{\boldsymbol h}}
\def\cH{{\cal H}}
\def\bI{{\boldsymbol I}}
\def\bell{{\boldsymbol \ell}}
\def\tM{\widetilde{M}}
\def\bO{{\boldsymbol O}}
\def\cO{{\cal O}}
\def\bp{{\boldsymbol p}}
\def\rP{{\mathrm P}}
\def\bP{{\boldsymbol P}}
\def\bbP{{\mathbb P}}
\def\tP{\widetilde{P}}
\def\bQ{{\boldsymbol Q}}
\def\bfr{{\boldsymbol r}}
\def\tr{\tilde{r}}
\def\hr{\hat{r}}
\def\hbr{\hat{\boldsymbol r}}
\def\tbr{\tilde{\boldsymbol r}}
\def\bs{{\boldsymbol s}}
\def\hs{\widehat{ s}}
\def\cS{{\cal S}}
\def\bt{{\boldsymbol t}}
\def\bT{{\boldsymbol T}}
\def\bu{{\boldsymbol u}}
\def\hbu{\widehat{\boldsymbol u}}
\def\bU{{\boldsymbol U}}
\def\tu{\tilde{u}}
\def\bv{{\boldsymbol v}}
\def\bV{{\boldsymbol V}}
\def\bw{{\boldsymbol w}}
\def\bW{{\boldsymbol W}}
\def\tw{\tilde{w}}
\def\bx{{\boldsymbol{x}}}
\def\tx{\widetilde{x}}
\def\tbx{\widetilde{\boldsymbol{x}}}
\def\bX{{\boldsymbol X}}
\def\cX{{\cal X}}
\def\tX{\widetilde{X}}
\def\ty{\tilde{y}}
\def\by{{\boldsymbol y}}
\def\bY{{\boldsymbol Y}}
\def\tby{\tilde{\boldsymbol y}}
\def\tY{\widetilde{Y}}
\def\hy{\hat{y}}
\def\byhat{{\hat {\boldsymbol y}}}
\def\tY{\widetilde{Y}}
\def\cY{{\cal Y}}
\def\Ybar{\overline{Y}}
\def\bz{{\boldsymbol z}}
\def\bZ{{\boldsymbol Z}}
\def\cZ{{\cal Z}}
\def\tz{\tilde{z}}
\def\tZ{\widetilde{Z}}
\def\brho{{\boldsymbol{\rho}}}
\def\bzero{{\boldsymbol 0}}
\def\eps{\epsilon}
\def\veps{\varepsilon}
\def\bveps{\boldsymbol{varepsilon}}
\def\tveps{\widetilde{\varepsilon}}
\def\tbveps{\widetilde{\boldsymbol{\varepsilon}}}
\def\Ghat{\widehat{G}}
\def\argmax{\mathop{\rm argmax}}
\def\argmin{\mathop{\rm argmin}}
\def\real{\mathop{{\rm I}\kern-.2em\hbox{\rm R}}\nolimits}
\def\diag{\mbox{diag}}

\def\sgn{\hbox{sgn}}
\def\Var{\hbox{Var}}
\def\Cov{\hbox{Cov}}
\def\Rem{\hbox{Rem}}

\def\whbeta{\widehat{\beta}}
\def\hbeta{\hat{\beta}}
\def\whtheta{\widehat{\theta}}
\def\htheta{\hat{\theta}}
\def\whF{\widehat{F}}
\def\hF{\hat{F}}
\def\Dnu{\Delta_{\nu}}
\def\median{\mbox{median}}
\def\sign{\mbox{sign}}
\def\trace{\mbox{trace}}

\def\bone{{\boldsymbol 1}}
\def\bzero{{\boldsymbol 0}}
\def\balpha{\boldsymbol \alpha}
\def\btheta{\boldsymbol \theta}
\def\bbeta{\boldsymbol \beta}
\def\bgamma{\boldsymbol \gamma}
\def\hbbeta{\hat{\boldsymbol \beta}}
\def\hsbbetan{\hat{\boldsymbol \beta_n^*}}
\def\tbeta{\tilde{\beta}}
\def\tbbeta{\tilde{\boldsymbol \beta}}
\def\bdelta{\boldsymbol \delta}
\def\bata{\boldsymbol \eta}
\def\gam{\gamma}
\def\lam{\lambda}
\def\blam{\boldsymbol \lambda}
\def\hmu{\widehat{\mu}}
\def\bmu{\boldsymbol \mu}
\def\bnu{\boldsymbol \nu}
\def\hphi{\widehat{\phi}}
\def\drho{\dot{\rho}}
\def\hsigma{\widehat{\sigma}}
\def\ttheta{\widetilde{\theta}}
\def\hbtheta{\widehat{\boldsymbol \theta}}
\def\bveps{\boldsymbol \varepsilon}
\def\tbveps{{\tilde\bveps}}
\def\bxi{\boldsymbol \xi}
\def\txi{\tilde{\xi}}
\def\tzeta{\tilde{\zeta}}

\begin{abstract}
In cancer research, profiling studies have been extensively conducted, searching for genes/SNPs associated with prognosis. Cancer is a heterogeneous disease. Examining similarity and difference in the genetic basis of multiple subtypes of the same cancer can lead to better understanding of their connections and distinctions. Classic meta-analysis approaches analyze each subtype separately and then compare analysis results across subtypes. Integrative analysis approaches, in contrast, analyze the raw data on multiple subtypes simultaneously and can outperform meta-analysis. In this study, prognosis data on multiple subtypes of the same cancer are analyzed.  An AFT (accelerated failure time) model is adopted to describe survival. The genetic basis of multiple subtypes is described using the heterogeneity model, which allows a gene/SNP to be associated with the prognosis of some subtypes but not the others. A compound penalization approach is developed to conduct gene-level analysis and identify genes that contain important SNPs associated with prognosis. The proposed approach has an intuitive formulation and can be realized using an iterative algorithm. Asymptotic properties are rigorously established. Simulation shows that the proposed approach has satisfactory performance and outperforms meta-analysis using penalization. An NHL (non-Hodgkin lymphoma) prognosis study with SNP measurements is analyzed. Genes associated with the three major subtypes, namely DLBCL, FL, and CLL/SLL, are identified. The proposed approach identifies genes different from alternative analysis and has reasonable prediction performance.
\end{abstract}
{\bf Keywords:} Cancer prognosis; Integrative analysis; Marker selection; Penalization.

\section{Introduction}

Profiling studies have been extensively conducted in cancer research, searching for SNPs (single nucleotide polymorphisms) and genes that are associated with prognosis. Cancer is a heterogeneous disease. Different subtypes of the same cancer usually have different prognosis patterns and different associated genes/SNPs. Consider NHL (non-Hodgkin lymphoma), which is a heterogeneous group of malignancies ranging from very indolent forms to aggressive ones. As discussed in Zhang et al. (2011), different subtypes of NHL are largely different. For example, DLBCL, the largest subtype, is aggressive, whereas FL, the second largest subtype, is indolent. Chromosomal translocations such as t(3, 22) are specific to DLBCL, whereas others such as t(14, 18) are specific to FL. On the other hand, different subtypes may also share common susceptibility genes/SNPs. Genes in the cell cycle, multiple signaling, RAS, and DNA repair pathways are involved in the development and progression of multiple cancers including NHL. For NHL, Han X. et al. (2010) and Ma et al. (2010) have found that SNPs in multiple genes, such as BRCA2, CASP3, IRF1, BCL2, NAT2, ALXO12B, are associated with both DLBCL and FL. Investigating the similarity and difference in the genetic basis of multiple subtypes of the same cancer can lead to better understanding of the connections and distinctions among subtypes (Rhodes et al. 2004; Goh and Choi 2012).

A comparable setting where the discussion and proposed method are relevant is the analysis of prognosis data on multiple types of cancers. As discussed in Rhodes et al. (2004) and followup studies, susceptibility genes shared by multiple types of cancers are more likely to represent the more essential features of cancer, whereas cancer type-specific genes determine the distinctions among different cancers.

When multiple subtypes of the same cancer are of interest, as discussed in Zhang et al. (2011), most studies analyze each subtype separately and then compare results across subtypes. For NHL, Han et al. (2010) and Ma et al. (2010) take this approach. Such a strategy fits in the classic meta-analysis framework. With high-dimensional measurements such as SNPs, data on individual subtypes have the ``large $d$, small $n$" characteristic, with the sample size $n$ much smaller than the number of SNPs $d$. Because of the low sample size, susceptibility genes/SNPs identified from the analysis of each subtype may have unsatisfactory properties. Recent studies have shown that, when multiple datasets (multiple subtypes in this study) have overlapping susceptibility SNPs/genes, integrative analysis can analyze raw data of multiple datasets simultaneously and generate improved analysis results over the analysis of individual datasets and meta-analysis (Liu et al. 2012; Ma et al. 2009; Ma et al. 2012).

With data on multiple subtypes, the goal is to identify genes associated with prognosis. For marker identification, we adopt penalization, which has been extensively applied to the analysis of cancer prognosis data with high-dimensional genetic measurements. Single-dataset penalization methods, such as Lasso, SCAD, bridge, MCP and their group counterparts, cannot be directly applied to the analysis of multiple datasets. With multiple datasets, the homogeneity model assumes that,
if a functional unit (gene or SNP) is identified, it is concluded as associated with prognosis in all datasets (Liu et al. 2012). An alternative to the homogeneity model is the heterogeneity model, under which a gene or SNP can be associated with prognosis in some datasets but not the others. Under the heterogeneity model, research on penalization methods has been limited (Liu et al. 2012). Compared with the existing studies which analyze gene expression data, the present one has additional complexity. One gene may consist of multiple SNPs, and it is important to allow different effects for SNPs within the same gene. In addition, theoretical properties of the methods developed in Liu et al. (2012) and others have not been established. To the best of our knowledge, the only available method that is tailored to the type of data analyzed in this study is Ma et al. (2012), which adopts thresholding for marker selection. The thresholding method does not have a well-defined objective function. Thus its properties can be very difficult to establish. In addition, it may have more tuning parameters than the penalization method. Compared with the existing studies, another advancement of this study is the analysis of a prognosis data on NHL, which may provide insights into the genetic basis of this deadly disease.

The integrative analysis of data on multiple subtypes of cancer can be challenging. With some cancers, the subtype information may be only partial or even wrong. In addition, the definitions of subtypes are still evolving. For NHL subtypes, we refer to Zhang et al. (2011) and references therein for relevant discussions. When there are a large number of subtypes, the set of subtypes chosen for analysis needs to be jointly determined by the scientific question of interest, quality of data, sample size, evidence from epidemiologic studies and other factors. We acknowledge the importance and difficulty of these issues. In this study, we focus on the development of a new analysis approach and refer to other publications for relevant discussions.

\section{Integrative Analysis under the Heterogeneity Model}

\subsection{Data and model settings}

Assume that there are data on $M$ subtypes of the same cancer, and there are $n^m$ iid observations for subtype $m(=1,\ldots, M)$. The total sample size is $n=\sum_m n^m$. For subtype $m$, denote $T^m$ as the logarithm of failure time. Denote $X_o^m$ as the length-$d$ covariate vector (SNPs in this study). The subscript ``o" is used to discriminate the original (versus weighted) covariates.
For simplicity of notation, assume that different subtypes measure the same set of covariates. In practice, if a covariate is not measured for a specific subtype, its corresponding regression coefficient will be set as zero. In penalization, rescaling is used to accommodate partially matched covariate sets.

For subject $i$ of subtype $m$, the AFT (accelerated failure time) model assumes that
\begin{equation}
T_i^m=\beta_0^m + X_{oi}^{m\prime}\beta^m+\epsilon_i^m, ~i = 1,\dots,n^m.
\end{equation}
where $\beta_0^m$ is the intercept, $\beta^m\subseteq\mathbb{R}^d$ is the regression coefficient, and $\epsilon_i^m$ is the error term. As $T_i^m$ is subject to right censoring, we observe $(Y_{oi}^m,\delta_i^m,X_{oi}^m)$, where $Y_{oi}^m=\mathrm{min}\{T_i^m,C_i^m\}$, $C_i^m$ is the logarithm of censoring time, and $\delta_i^m=I\{T_i^m\le C_i^m\}$ is the event indicator.

Compared with alternatives such as the Cox model, the AFT model has a much simpler objective function, as demonstrated in the next section, and hence significantly lower computational cost. Such a property is particularly desirable for high-dimensional data. In addition, it directly describes event times, and its regression coefficients may have more lucid interpretations than those in alternative models. As there is a lack of model diagnostics tools for high-dimensional data, alternative models will not be discussed.

\subsection{Weighted least squares estimation}

Let $\hat{F}^m$ be the Kaplan-Meier estimator of the distribution function $F^m$ of $T^m$. Following Stute (1996), $\hat{F}^m$ can be written as $\hat{F}^m (y)=\sum_{i=1}^{n^m}\omega_i^m 1\{Y_{o(i)}^m\le y\}$, where the $\omega_i^m$'s are the jumps in the Kaplan-Meier estimator and can be expressed as
\begin{equation}
\omega_1^m=\frac{\delta_{(1)}^m}{n^m},~
\omega_i^m=\frac{\delta_{(i)}^m}{n^m-i+1}\prod_{j=1}^{i-1}\left( \frac{n^m-j}{n^m-j+1}\right)^{\delta_{(j)}^m},i=2,\dots,n^m. \notag
\end{equation}
$Y_{o(1)}^m\le\cdots\le Y_{o(n^m)}^m$ are the order statistics of $Y_{oi}^m$'s, and $\delta_{(1)}^m,\dots,\delta_{(n^m)}^m$ are the associated event indicators. Similarly, let $X_{o(1)}^m,\dots,X_{o(n^m)}^m$ be the associated covariate vectors of the ordered $Y_i^m$'s. Consider the weighted least squares objective function
\begin{equation}
L^m(\beta_0^m, \beta^m)=\frac{1}{2}\sum_{i=1}^{n^m} \omega_i^m \left(Y_{o(i)}^m-\beta_0^m-{X_{o(i)}^m}^{\prime}\beta^m\right)^2.
\end{equation}

Let $\bar{X}_{\omega}^m=\sum_{i=1}^{n^m} \omega_i^m X_{o(i)}^m/\sum_{i=1}^{n^m} \omega_i^m$, $\bar{Y}_{\omega}^m=\sum_{i=1}^{n^m} \omega_i^m Y_{o(i)}^m/\sum_{i=1}^{n^m} \omega_i^m$, $X_{\omega(i)}^m=(\omega_i^m)^{1/2}(X_{o(i)}^m-\bar{X}_{\omega}^m)$, and $Y_{\omega(i)}^m=(\omega_i^m)^{1/2}(Y_{o(i)}^m-\bar{Y}_{\omega}^m)$. Using the weighted centered values, the intercept is zero. The weighted least squares objective function can be written as
\begin{equation}
L^m(\beta^m)=\frac{1}{2}\sum_{i=1}^{n^m} \left(Y_{\omega(i)}^m-{X_{\omega(i)}^m}^{\prime}\beta^m\right)^2.
\end{equation}
This simple form makes computation affordable even with high-dimensional data.

Assume independence between data for the $M$ subtypes. Consider the overall objective function
$$L(\beta)=\sum_{m=1}^M \frac{1}{n^m} L^m(\beta^m).$$
Here we normalize $L^m$ by $n^m$ so that the analysis is not dominated by large subtypes. When larger subtypes are of more interest, the unnormalized objective function may be considered.

\subsection{Heterogeneity model}
As formulated in Liu et al. (2012), two different models, namely the homogeneity model and heterogeneity model, can be applied to describe the genetic basis of multiple subtypes. Under the homogeneity model, it is postulated that multiple subtypes share the same set of susceptibility SNPs/genes. Considering the significantly different prognosis patterns of different subtypes, this model may be too restricted. Under the heterogeneity model, the sets of susceptibility SNPs/genes may differ across subtypes. The heterogeneity model includes the homogeneity model as a special case and can be more flexible.

To more explicitly describe the data and model settings, heterogeneity model, and our analysis strategy, consider a hypothetical cancer study with three subtypes and eight SNPs representing four genes (Table 1). Gene 1 is associated with the prognosis of all three subtypes; Gene 2 is associated with the first two subtypes but not the third one; Gene 3 is associated with the third subtype only; And gene 4 is not associated with any subtype. In Table 1, we show the regression coefficient matrix whose main characteristics reflect the essence of integrative analysis under the heterogeneity model. Unimportant genes/SNPs not associated with prognosis have no effects and so zero regression coefficients. With penalization approaches including the proposed one, marker identification amounts to identifying the sparsity structure of models. For an important gene/SNP (for example SNP 1\_1), its strengths of association with multiple subtypes, which are measured with regression coefficients, may be different for different subtypes. With SNP data, analysis can be conducted at multiple levels, particularly including SNP-level and gene-level. In this study, we conduct gene-level analysis, which complements SNP-level analysis. As the goal is to identify important genes that contain prognosis-associated SNPs, within an important gene, no further selection is conducted. Thus, SNPs within the same gene have the ``all in or all out" property. Such a strategy has been adopted in Ma et al. (2012) and is different from that for SNP-level analysis approaches.

\section{Penalized Marker Selection}
\label{penalty_sec}

Penalization is adopted for marker selection. Under the data and model settings described in the previous section, existing penalization approaches are not directly applicable. A new penalization approach is described in Section \ref{pls}. An effective computational algorithm is proposed in Section ~\ref{comp}. Tuning parameter selection is discussed in Section \ref{tuning}. Some practical concerns are discussed in Section \ref{practical}. Asymptotic properties and proofs are established in Appendix.

\subsection{Penalty function}
\label{pls}

Assume that the $d$ SNPs belong to $J$ genes. To accommodate the scenario with partially matched gene sets, suppose that $M_j$ subtypes (studies) measure gene $j$. Without loss of generality, assume that gene $j$ is measured in the first $M_j$ subtypes. Denote $d_{jk}$ as the number of SNPs in the $j$th gene and $k$th subtype  with coefficient vector $\beta_{jk}=(\beta_{jk}^{1},\dots,\beta_{jk}^{d_{jk}})^{\prime}$, $j=1,\dots,J$, $k=1,\dots,M_j$. The subscript $k$ is kept to accommodate partially matched SNP sets for the same genes. Then $\beta_j=(\beta_{j1}^{\prime},\dots,\beta_{jM_j}^{\prime})^{\prime}$ is the regression coefficient for all SNPs in the $j$th gene across all subtypes.  Here the notations are slightly more complicated than those in Section 2 to accommodate the ``SNP-within-gene" hierarchical structure and partially matched SNP/gene sets. $\beta=(\beta_1^{\prime}, \ldots, \beta_J^{\prime})^{\prime}$.

Consider the penalized estimate
$$\hat\beta=argmin\left\{ L(\beta)+P_{\lambda_n,\gamma}(\beta)\right\}.$$
A nonzero component of $\hat\beta$ indicates an association between the corresponding gene (SNP) and subtype's prognosis. Consider the penalty function
\begin{equation}
P_{\lambda_n,\gamma}(\beta)=\lambda_n\sum_{j=1}^J c_j \left( \sum_{k=1}^{M_j} \sqrt{d_{jk}} \|\beta_{jk}\| \right)^{\gamma}, \label{penalty}
\end{equation}
where $\lambda_n> 0$ is a data-dependent tuning parameter, $c_j\propto M_j^{1-\gamma}$ is a constant accommodating partially matched gene sets, $||\cdot||$ is the $L_2$ norm, and $0<\gamma<1$ is the fixed bridge parameter.

The above penalty has been designed to tailor the special model characteristics as described in Table 1. In our analysis, genes are the basic functional units. The penalty is the sum of $J$ individual terms, with one for each gene. For a specific gene, two levels of selection need to be conducted. The first is to determine whether it is associated with any subtype at all. This step of selection is achieved using a bridge penalty. For a gene associated with at least one subtype, the second level of selection is to determine which subtype(s) it is associated with. This step of selection is achieved using a Lasso-type penalty. The composition of the bridge-type penalty and the Lasso-type penalty can achieve the desired two-level selection. One gene may contain multiple SNPs. The effect of gene $j$ for subtype $k$ is represented by the vector $\beta_{jk}$. Here the penalty is imposed on the $L_2$ norm of $\beta_{jk}$, which can be viewed as the square root of a ridge penalty. Thus, within a selected gene, no further SNP-level selection is conducted. If a gene is selected, all SNPs within this gene are selected.

When $d_{jk}\equiv 1$ (one SNP per gene), penalty (\ref{penalty}) becomes the group bridge, which has been developed for the analysis of a single dataset in Huang et al. (2009). This study is among the first to apply group bridge type penalization in integrative analysis. In addition, the proposed penalized estimation can be more complicated than that in Huang et al. (2009) by accommodating the ``SNP-within-gene" hierarchical structure. Using composite penalization for marker selection under the heterogeneity model has been proposed in Liu et al. (2012) for diagnosis studies with binary responses. The penalty in Liu et al. (2012) is built on the composition of MCP and Lasso, which is computationally more expensive, and cannot accommodate the ``SNP-within-gene" structure.

\subsection{Computational algorithm}
\label{comp}

For subtype $m$, denote $Y^m$ as the vector composed of $\frac{\sqrt{n}}{\sqrt{n^m}}Y^m_{\omega}$s, and $X^m$ as the matrix composed of $\frac{\sqrt{n}}{\sqrt{n^m}} X^m_{\omega}$s. Let $Y=(Y^{1\prime},\dots,Y^{M\prime})^{\prime}$ and $X=\mathrm{diag}(X^1,\dots,X^M)$.
Denote $X_j$ as the submatrix of $X$ corresponding to $\beta_j$, and its dimension is $n\times \sum_{k=1}^{M_j} d_{jk}$. Then
\begin{equation}
 L(\beta)=\sum_{m=1}^M \frac{1}{2n^m}\sum_{i=1}^{n^m} (Y_{\omega}^m-{X_{\omega}^m}^{\prime}\beta^m)^2=\frac{1}{2n}\lVert Y-\sum_{j=1}^J X_j\beta_j\rVert^2.
\end{equation}

The overall objective function is
\begin{equation}
\frac{1}{2n}\lVert Y-\sum_{j=1}^J X_j\beta_j\rVert^2+\lambda_n\sum_{j=1}^J c_j \left( \sum_{k=1}^{M_j} \sqrt{d_{jk}} \|\beta_{jk}\| \right)^{\gamma}.  \label{obj}
\end{equation}
Define
\begin{equation}
\label{pls_s}
S (\beta,\theta)=\frac{1}{2n}\lVert Y-\sum_{j=1}^J X_j\beta_j\rVert^2+\sum_{j=1}^J \theta_j^{1-1/\gamma}c_j^{1/\gamma}\sum_{k=1}^{M_j}
\sqrt{d_{jk}}\|\beta_{jk}\|+\tau_n\sum_{j=1}^J\theta_j,
\end{equation}
where $\theta=(\theta_1, \ldots, \theta_J)^{\prime}$, and $\tau_n$ is a penalty parameter.

\begin{proposition}
\label{prop1}
If $\lambda_n = \tau_n^{1-\gamma}\gamma^{-\gamma}(1-\gamma)^{\gamma-1}$,
then $\hat\beta$ minimizes the objective function in (\ref{obj}) if and only if $(\hat\beta,\hat\theta)$ minimizes $S(\beta,\theta)$ subject to $\theta_j\ge 0$ for all $j$.
\end{proposition}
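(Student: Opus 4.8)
The plan is to recognize $S(\beta,\theta)$ as a variational (partial-minimization) representation of the objective in (\ref{obj}), exactly in the spirit of the group bridge reformulation of Huang et al. (2009), and to reduce the ``if and only if'' to the standard fact that joint minimization of $S$ over $(\beta,\theta)$ is equivalent to minimization of its profile $\beta\mapsto\min_{\theta\ge 0}S(\beta,\theta)$. The first and central step is therefore to minimize $S(\beta,\theta)$ over $\theta\ge 0$ with $\beta$ held fixed. Since both the middle penalty in $S$ and the constraint $\theta_j\ge 0$ decouple across $j$, this splits into $J$ independent one-dimensional problems. Writing $t_j=\sum_{k=1}^{M_j}\sqrt{d_{jk}}\|\beta_{jk}\|\ge 0$ and $a_j=c_j^{1/\gamma}t_j$, the $j$th problem is to minimize $g_j(\theta_j)=a_j\theta_j^{1-1/\gamma}+\tau_n\theta_j$ over $\theta_j\ge 0$.

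Next I would solve each one-dimensional problem in closed form. When $a_j=0$ (that is, $\beta_{jk}=0$ for all $k$), $g_j$ is increasing in $\theta_j$, the minimizer is $\theta_j=0$, and $g_j(0)=0$, matching $\lambda_n c_j t_j^{\gamma}=0$ under the convention $0^{\gamma}=0$. When $a_j>0$, the exponent $1-1/\gamma$ is negative, so $g_j\to\infty$ both as $\theta_j\to 0^+$ and as $\theta_j\to\infty$, while $g_j$ is strictly convex on $(0,\infty)$ because $g_j''(\theta_j)=a_j(1/\gamma)(1/\gamma-1)\theta_j^{-1/\gamma-1}>0$. Setting $g_j'(\theta_j)=a_j(1-1/\gamma)\theta_j^{-1/\gamma}+\tau_n=0$ then yields the unique interior minimizer $\hat\theta_j=\bigl(a_j(1-\gamma)/(\tau_n\gamma)\bigr)^{\gamma}$. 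The one routine-but-delicate part is to substitute $\hat\theta_j$ back and simplify the exponents of $\gamma$, $1-\gamma$, and $\tau_n$; this gives $g_j(\hat\theta_j)=\tau_n^{1-\gamma}\gamma^{-\gamma}(1-\gamma)^{\gamma-1}\,c_j t_j^{\gamma}$, so that $\min_{\theta_j\ge 0}g_j(\theta_j)=\lambda_n c_j t_j^{\gamma}$ precisely under the stated calibration $\lambda_n=\tau_n^{1-\gamma}\gamma^{-\gamma}(1-\gamma)^{\gamma-1}$. Summing over $j$ then shows that the profile of $S$ is exactly $L(\beta)+P_{\lambda_n,\gamma}(\beta)$, the objective in (\ref{obj}).

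With this identity in hand, the equivalence follows from the usual partial-minimization argument. Since $S(\beta,\theta)\ge\min_{\theta'\ge 0}S(\beta,\theta')=L(\beta)+P_{\lambda_n,\gamma}(\beta)$ for every feasible $(\beta,\theta)$, with equality at $\theta=\hat\theta(\beta)$, any joint minimizer $(\hat\beta,\hat\theta)$ of $S$ must have $\hat\theta=\hat\theta(\hat\beta)$, and its $\hat\beta$-component must minimize (\ref{obj}); conversely, if $\hat\beta$ minimizes (\ref{obj}), then the pair $(\hat\beta,\hat\theta(\hat\beta))$ attains the joint minimum of $S$. I expect the only real obstacle to be bookkeeping rather than depth: one must verify that the boundary case $t_j=0$ is handled consistently, since there the optimal $\theta_j$ sits on the constraint boundary rather than at an interior stationary point, and one must track the fractional powers carefully enough that the multiplicative constant collapses to exactly the prescribed $\lambda_n$--$\tau_n$ relation.
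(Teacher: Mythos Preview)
Your proposal is correct and follows essentially the same route as the paper: profile out $\theta$ for fixed $\beta$ by solving the decoupled one-dimensional problems, substitute the optimizer back to recover the objective in (\ref{obj}), and invoke the partial-minimization equivalence. If anything, your treatment is more careful than the paper's, which simply writes down the first-order condition and substitutes without explicitly discussing the boundary case $t_j=0$ or the convexity/coercivity that guarantees the interior stationary point is the global minimizer.
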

Proof is provided in Appendix. Examining $S(\beta, \theta)$ suggests that optimization with respect to $\beta$ and $\theta$ can be conducted ``separately". Optimization with respect to $\theta$ has a simple analytic solution. Optimization with respect to $\beta$ has a weighted group Lasso type objective function, for which there exist effective algorithms. Motivated by such an observation, consider the following algorithm:
\begin{enumerate}
\item
Denote $\beta^{(0)}$ as the initial estimate. In our limited numerical study, the choice of initial estimate does not have much impact on the final estimate. For simplicity, all components of $\beta^{(0)}$ are set to be 1. Set $s=0$;
\item
$s=s+1$. Compute
\begin{equation}
 \theta_j^{(s)}=c_j\left( \frac{1-\gamma}{\gamma\tau_n} \right)^\gamma \left( \sum_{k=1}^{M_j} \sqrt{d_{jk}}\left\Vert\beta_{jk}^{(s-1)}\right\Vert \right)^{\gamma},
\end{equation}
\begin{equation}
\label{update2}
\beta^{(s)}=argmin_{\beta}\left( \frac{1}{2n}\lVert Y-\sum_{j=1}^J X_j\beta_j\rVert^2+\sum_{j=1}^J (\theta_j^{(s)})^{1-1/\gamma}c_j^{1/\gamma}\sum_{k=1}^{M_j}
\sqrt{d_{jk}}\|\beta_{jk}\| \right),
\end{equation}

\item
Repeat Step 2 until convergence.
\end{enumerate}

In numerical study, we use the $L_2$ norm of the difference between two consecutive estimates less than 0.001 as the convergence criterion. The proposed algorithm always converges, since at each step, the nonnegative objective function decreases. It is noted that as the group bridge type penalty is not convex, the algorithm may converge to a local minimizer depending on the initial value $\beta^{(0)}$. Using the proposed initial value works well in our numerical study. The main computational task is the computation of $\beta^{(s)}$, which is a group Lasso type estimate and can be achieved using a group coordinate descent algorithm (Huang et al. 2012; Liu et al. 2012). Convergence of the group coordinate descent algorithm can be derived following Tseng (2001).

\subsection{Tuning parameter selection}
\label{tuning}

The proposed penalty involves two tuning parameters $\gamma$ and $\lambda_n$. In the study of bridge type penalties (Huang et al. 2009), the value of $\gamma$ is usually fixed. Theoretically speaking, different values of $\gamma$, as long as in the interval (0, 1), lead to similar asymptotic results. In practice as $\gamma\to 1$, the bridge type penalty goes to the Lasso type penalty; On the other hand, as $\gamma \to 0$, it behaves similarly to AIC/BIC type penalties. In our numerical study, we experiment with a few $\gamma$ values, particularly including 0.5, 0.7 and 0.9. The effect of $\lambda_n$ is similar to that with other penalties. As $\lambda_n \to \infty$, fewer genes/SNPs are identified.

As the function $L(\beta)$ has a least squares form, we propose using BIC for tuning parameter selection. Particularly, with a fixed $\gamma$, the optimal $\lambda_n$ minimizes
\begin{equation}
\mathrm{BIC}(\lambda_n) = \log \left\{ ||Y-X\hat{\beta}(\lambda_n)||^2/n\right\}+\log (n)\mathrm{df}(\lambda_n)/n. \notag
\end{equation}
Here we use the notation $\hat{\beta}(\lambda_n)$ to emphasize the dependence of $\hat\beta$ on $\lambda_n$. Motivated by Yuan and Lin (2006), an approximation of the degree of freedom is adopted as
\begin{equation}
\label{df}
\tilde{\mathrm{df}}(\lambda_n)=\sum_{j=1}^J\sum_{k=1}^{M_j} I(||\hat\beta_{jk}||>0)+\sum_{j=1}^J\sum_{k=1}^{M_j} \frac{||\hat\beta_{jk}||}
{||\hat\beta_{jk}^{\mathrm{LS}}||}(d_{jk}-1).
\end{equation}
Here $\hat\beta_{jk}^{\mathrm{LS}}$  is obtained by fitting an AFT model (with least squares estimation) using the $j$th gene and $k$th subtype only.

\subsection{Practical considerations}
\label{practical}
With practical data, minor allele frequencies in some loci can be low. This may cause an instability problem in the Cholesky decomposition when some eigenvalues of the correlation matrices are too small. In the proposed penalized selection, within-gene-SNP level selection is not of interest. To reduce the dimensionality within genes and to tackle the colinearity problem, when there is evidence of a lack of stability, we first conduct principal component analysis (PCA) within genes. Specifically, we choose the number of PCs such that at least 90$\%$ of the total variation is explained. Then the PCs, as opposed to the original SNP measurements, are used in downstream analysis. Our empirical study suggests that this simple step may ensure that the smallest eigenvalues of the covariance matrices are not too small and that the Cholesky decomposition is stable.

\section{Simulation Study}
Three datasets (subtypes) are simulated, each with 100 subjects. For each subject, the genotypes of 200 genes are simulated, each with 5 SNPs. There are thus a total of 1000 SNPs for each subtype. The genotypes are first generated from multivariate normal distributions. Then the value of each SNP is set equal to 0, 1, or 2, depending on whether the continuous (normally distributed) value is $<-c$, $\in[-c, c]$, or $>c$, where $c$ is the 3$rd$ quartile of the standard normal distribution. Thus on average, each SNP has equal allele frequencies.
Genotype $j$ and $k$, if from different genes, have correlation coefficient $0.2^{|j-k|}$. For genotypes from the same genes, consider the following two correlation structures. The first is the auto-regressive correlation, where genotype $j$ and $k$ have correlation coefficient $\rho^{|j-k|}$. $\rho=0.2$, 0.5, and 0.8, corresponding to weak, moderate, and strong correlations, respectively. The second is the banded correlation structure. Here three scenarios are considered. Under the first scenario, genotype $j$ and $k$ have correlation coefficient 0.2 if $|j-k|=1$, 0.1 if $|j-k|=2$, and 0 otherwise; Under the second scenario, genotype $j$ and $k$ have correlation coefficient 0.5 if $|j-k|=1$, 0.25 if $|j-k|=2$, and 0 otherwise. Under the third scenario, genotype $j$ and $k$ have correlation coefficient 0.6 if $|j-k|=1$, 0.33 if $|j-k|=2$, and 0 otherwise.

Consider two cases of the nonzero regression coefficients. In case 1, the nonzero regression coefficients for subtype 1 and 2 are (0.15, 0.15, 0.15, 0.15, 0.15, 0.1, 0.1, 0.1, 0.1, 0.1, 0.15, 0.15, 0.15, 0.15, 0.15, 0.1, 0.1, 0.1, 0.1, 0.1), and the nonzero coefficients for subtype 3 are (0.1, 0.1, 0.1, 0.1, 0.1, 0.15, 0.15, 0.15, 0.15, 0.15, 0.15, 0.15, 0.15, 0.15, 0.15, 0.1, 0.1, 0.1, 0.1, 0.1). In case 2, the nonzero regression coefficients for subtype 1 and 2 are (0.15, 0.15, 0.15, 0.15, 0, 0, 0.1, 0.1, 0.1, 0.1, 0.15, 0.15, 0.15, 0.15, 0.1, 0.1, 0.1, 0.1, 0.1, 0.1), and the nonzero coefficients for subtype 3 are (0.1, 0.1, 0.1, 0.1, 0.1, 0.15, 0.15, 0.15, 0.15, 0, 0.15, 0.15, 0.15, 0.15, 0, 0, 0.1, 0.1, 0.1, 0.1). Thus, across the three subtypes, there are 60 SNPs associated with prognosis, representing 12 genes.

Two scenarios under the heterogeneity model are considered. Under the first scenario, all three subtypes share three common susceptibility genes, and each subtype has one subtype-specific susceptibility gene. The ``unmatching rate" of susceptibility genes is thus 25\%. Under the second scenario, the three subtypes share two common susceptibility genes, and each subtype has two subtype-specific susceptibility genes. The unmatching rate of susceptibility genes is 50\%. As a special case of the heterogeneity model, the homogeneity model is also considered, under which all three subtypes have the same susceptibility genes.

The logarithms of event times are generated from the AFT models with intercept equal to 0.5 and normally distributed random errors. The logarithms of censoring times are generated as uniformly distributed and independent of the event times. The censoring distribution parameters are adjusted so that overall censoring rate is about 30\%.

Beyond the proposed approach, simulated data are also analyzed using a meta-analysis approach. Here each subtype is analyzed using the group Lasso (GLasso) approach, where a ``group" corresponds to one gene with multiple SNPs. Then the identified gene lists are combined across subtypes. With both approaches, the tuning parameters are chosen using BIC. With the proposed approach, we experiment with $\gamma=$0.5, 0.7, and 0.9. Summary statistics on gene identification accuracy based on 100 replicates are shown in Table 2-3 and 5-8 (Appendix).

Simulation suggests that performance of the proposed approach depends on the correlation structure, values of nonzero regression coefficients, and $\gamma$ value. As correlation gets stronger, in general, more true positives and more false positives are identified. We fail to observe a clear pattern for the dependence (of the performance of proposed approach) on nonzero regression coefficients. As $\gamma$ gets larger, also more true positives and more false positives are identified. This observation is reasonable, considering that when $\gamma \to 1$, bridge penalization becomes close to Lasso, and that Lasso-type penalization tends to over-select. Under almost all simulated scenarios, the proposed approach identifies more true positives than GLasso. For example in Table 2, under the AR correlation with $\rho=0.5$, GLasso identifies 6.7 true positives, whereas the proposed approach identifies 10.7, 10.8, and 11.0 true positives under different $\gamma$ values. Under all simulated scenarios, the proposed approach identifies much fewer false positives. For example in Table 2, under the banded correlation scenario 2, GLasso identifies 38 false positives, whereas the proposed approach identifies 4, 4.3, and 9.8 false positives under different $\gamma$ values. Observations under the homogeneity model (Table 7 and 8) are similar. We have experimented with a few other settings and reached similar conclusions.

\section{Analysis of NHL Genetic Association Data}
NHL is the fifth leading cause of cancer incidence and mortality in the US and remains poorly understood and largely incurable. A genetic association study was conducted, searching for SNPs/genes associated with overall survival in NHL patients (Zhang et al. 2005). The prognostic cohort consists of 575 NHL patients, among whom 496 donated either blood or buccal cell samples. All cases were classified into NHL subtypes according to the World Health Organization classification system. Specifically, 155 had DLBCL, 117 had FL, 57 had CLL/SLL, 34 had MZBL, 37 had T/NK-cell lymphoma, and 96 had other subtypes. Because of sample size consideration, we focus on DLBCL, FL, and CLL/SLL, the three largest subtypes in this dataset. The study cohort was assembled in Connecticut between 1996 and 2000. Vital status of all subjects was abstracted from the CTR (Connecticut Tumor Registry) in 2008.

When genotyping, we took a candidate gene approach. Specifically, a total of 1462 tag SNPs from 210 candidate genes related to immune response were genotyped using a custom-designed GoldenGate assay. In addition, 302 SNPs in 143 candidate genes previously genotyped by Taqman assay were also included. There were a total of 1764 SNPs, representing 333 genes. Data preprocessing is conducted. Subjects with more than 20\% SNPs missing are removed from analysis; Then SNPs with more than 20\% missing are removed. The genotyping data were missing for the following reasons: the amount of DNA was too low, samples failed to amplify, samples amplified but their genotype could not be determined due to ambiguous results, or the DNA quality was poor. The remaining missing SNP measurements are then imputed. A total of 1,633 SNPs pass processing, representing 238 genes.

For DLBCL, 139 patients pass processing. Among them, 61 died, with survival times ranging from 0.47 to 10.46 years (mean 4.16 years). For the 78 censored patients, the follow up times range from 5.58 to 11.45 years (mean 9.08 years). For FL, 102 patients pass processing. Among them, 33 died, with survival times ranging from 0.91 to 10.23 years. For the 69 censored patients, the follow up times range from 4.96 to 11.39 years, with mean 8.83 years. For CLL/SLL, 50 patients pass processing. Among them, 27 died, with survival times ranging from 1.91 to 10.13 years (mean 4.85 years). For the 23 censored patients, the follow up times range from 4.92 to 11.07 years, with mean 8.83 years.

Analysis results using the proposed approach are shown in Table 4 and 9-11 (Appendix). In particular, Table 4 contains the $L_2$ norms of the identified genes, whereas Table 9-11 contain the estimated regression coefficients for SNPs. Fourteen genes are identified as associated with the overall survival of DLBCL; Twelve genes are identified as associated with FL; And five genes are identified as associated with CLL/SLL. Among the identified genes, MBP and STAT4 are shared by all three subtypes, ALOX5, IL10, IRAK2, LMAN1, MIF, and NCF4 are shared by two subtypes, and thirteen other genes are identified as subtype-specific.

Among genes shared by multiple subtypes, gene ALOX5 encodes a member of the lipoxygenase gene family and plays a dual role in the synthesis of leukotrienes from arachidonic acid. Mutations in the promoter region of this gene lead to a diminished response to antileukotriene drugs used in the treatment of asthma and are associated with atherosclerosis and several cancers. Studies that have identified this gene as a marker of NHL include Mahshid et al. (2009), Feltenmark et al. (1995) and others.
The protein encoded by gene IL10 is a cytokine produced primarily by monocytes and to a lesser extent by lymphocytes. This cytokine has pleiotropic effects in immunoregulation and inflammation. It down-regulates the expression of Th1 cytokines, MHC class II Ags, and costimulatory molecules on macrophages. It also enhances B cell survival, proliferation, and antibody production. This cytokine can block NF-kappa B activity, and is involved in the regulation of the JAK-STAT signaling pathway. The involvement of IL10 in NHL has been proposed in Bi et al. (2012) and Deng et al. (2013). IRAK2 encodes the interleukin-1 receptor-associated kinase 2, one of two putative serine/threonine kinases that become associated with the interleukin-1 receptor (IL1R) upon stimulation. It is identified as associated with NHL in Ngo et al. (2011). The protein encoded by gene LMAN1 is a type I integral membrane protein localized in the intermediate region between the endoplasmic reticulum and the Golgi. The protein is a mannose-specific lectin and a member of a novel family of plant lectin homologs in the secretory pathway of animal cells. Mutations in the gene are associated with a coagulation defect. It has been identified as a marker for gastric, colorectal, and prostate cancer. Myelin basic protein (MBP) is a protein important in the process of myelination of nerves in the central nervous system (CNS). MBP plays an important role in demyelinating diseases such as multiple sclerosis. Its involvement in NHL has been discussed in Hu et al. (2012) and Han et al. (2010).
Gene MIF encodes a lymphokine involved in cell-mediated immunity, immunoregulation, and inflammation. It plays a role in the regulation of macrophage function in host defense through the suppression of anti-inflammatory effects of glucocorticoids. This lymphokine and the JAB1 protein form a complex in the cytosol near the peripheral plasma membrane, which may indicate an additional role in integrin signaling pathways. Studies such as Xue et al. (2010) and Talos et al. (2005) have identified it as a marker of NHL. The protein encoded by gene NCF4 is a cytosolic regulatory component of the superoxide-producing phagocyte NADPH-oxidase, a multicomponent enzyme system important for host defense. It is identified as an NHL susceptibility gene in Kim et al. (2012). The protein encoded by gene STAT4 is a member of the STAT family of transcription factors. In response to cytokines and growth factors, STAT family members are phosphorylated by the receptor associated kinases, and then form homo- or heterodimers that translocate to the cell nucleus where they act as transcription activators. This protein is essential for mediating responses to IL12 in lymphocytes, and regulating the differentiation of T helper cells. Involvement of this gene in NHL risk and progression has been discussed in Chang et al. (2010) and Chang et al. (2009).

The relative stability of identified genes is evaluated using a random sampling approach (Huang and Ma 2010). In particular, we randomly sample 3/4 of the subjects and apply the proposed approach to identify prognosis-associated genes. This process is repeated 100 times. For each gene, we compute the probability of it being identified out of the 100 samplings. This probability is referred to as the observed occurrence index in Huang and Ma (2010) and measures the relative stability. Table 4 shows that only gene IL10 for DLBCL has a low occurrence index (0.21). All other observed occurrence indexes are high, suggesting relatively satisfactory stability. Prediction performance is also evaluated using a random sampling approach. In particular, genes are identified and models are constructed using 3/4 of randomly sampled subjects. Then prediction is made for the rest 1/4 subjects. Based on the predicted $X^{m\prime}\hat\beta^m$, subjects are separated into two risk groups. The logrank statistic is computed to compare the survival risk of the two groups. This process is repeated 100 times, and the mean logrank statistic is computed as 7.1 (p-value 0.0077), suggesting satisfactory prediction.

For comparison, we also analyze each subtype separately using GLasso (results shown in Table 12, Appendix). Twenty-six genes are identified as associated with the prognosis of DLBCL, seventeen genes are identified as associated with FL, and eight genes are identified as associated with CLL/SLL. Among those genes, three are shared by two subtypes. The identified genes are different from those using the proposed approach. Computation of the observed occurrence index shows that the identified genes also have satisfactory stability. Prediction evaluation generates a logrank statistic of 0.2 (p-value 0.65), which is considerably smaller than that using the proposed approach.

\section{Discussion}
In this study, with prognosis data on multiple subtypes of the same cancer, we develop a penalization approach which can conduct integrative analysis, identify important genes that contain SNPs associated with multiple subtypes, and allow for subtype-specific susceptibility genes. The proposed approach can be realized using an effective iterative algorithm. Under mild conditions, it has the much desired consistency properties. Simulation shows that the proposed approach outperforms penalization-based meta-analysis, with more true positives and fewer false positives. In the analysis of NHL prognosis data, it identifies multiple genes shared by two or three subtypes as well as subtype-specific genes. The shared genes have important biological implications. The proposed approach also leads to significantly better prediction performance.

To avoid confusion, in our description we focus on the scenario with multiple subtypes of the same cancer and the ``SNP-within-gene" structure. The proposed approach is directly applicable to the analysis of multiple types of cancers and ``gene-within-cluster (pathway)" and other structures. In addition, with minor modifications, analysis of prognosis data under other models and analysis of diagnosis data can be conducted. The proposed penalty is built on bridge-type penalties. We conjecture that it is possible to build on other penalties such as MCP. Our limited investigation shows that under the present setup, the proposed penalty may have the lowest computational cost. In data analysis, our preliminary search shows that the common genes shared by multiple subtypes have important implications. However, because of the following limitations, the analysis results should be interpreted with caution. First, the sample size is still limited. Second, the NHL study takes a candidate gene approach. It is possible that important genes have been missed in the profiling stage. Third, the proposed evaluation is cross-validation based. Although it can compare different approaches on the same ground, it does not use completely independent data. More, independent studies are needed to fully comprehend the data analysis results.

\begin{singlespace}
\section*{References}
\begin{enumerate}

\item
Bi X, Zheng T, Lan Q, Xu Z, Chen Y, Zhu G, Foss F, Kim C, Dai M, Zhao P, Holford T, Leaderer B, Boyle P, Deng Q, Chanock SJ, Rothman N, Zhang Y. Am J Hematol. (2012) Genetic polymorphisms in IL10RA and TNF modify the association between blood transfusion and risk of non-Hodgkin lymphoma. 87:766-769.

\item
Chang HC, Han L, Goswami R, Nguyen ET, Pelloso D, Robertson MJ, Kaplan MH. (2009) Impaired development of human Th1 cells in patients with deficient expression of STAT4. Blood. 113(23):5887-5890.

\item
Chang JS, Wiemels JL, Chokkalingam AP, Metayer C, Barcellos LF, Hansen HM, Aldrich MC, Guha N, Urayama KY, Scélo G, Green J, May SL, Kiley VA, Wiencke JK, Buffler PA. (2010) Genetic polymorphisms in adaptive immunity genes and childhood acute lymphoblastic leukemia. Cancer Epidemiol Biomarkers Prev. 19(9): 2152-2163.

\item
Deng Q, Zheng T, Lan Q, Lan Y, Holford T, Chen Y, Dai M, Leaderer B, Boyle P, Chanock SJ, Rothman N, Zhang Y. (2013) Occupational solvent exposure, genetic variation in immune genes, and the risk for non-Hodgkin lymphoma.
Eur J Cancer Prev. 22(1): 77-82.

\item
Feltenmark S, Runarsson G, Larsson P, Jakobsson PJ, Bjorkholm M, Claesson HE. (1995) Diverse expression of cytosolic phospholipase A2, 5-lipoxygenase and prostaglandin H synthase 2 in acute pre-B-lymphocytic leukaemia cells. Br J Haematol. 90(3): 585-594.

\item
Goh KI, CHoi IG. (2012) Exploring the human diseasome: the human disease network. Brief Funct Genomics. 11(6): 533-542.

\item
Han S, Lan Q, Park AK, Lee KM, Park SK, Ahn HS, Shin HY, Kang HJ, Koo HH, Seo JJ, Choi JE, Ahn YO, Chanock SJ, Kim H, Rothman N, Kang D. (2010) Polymorphisms in innate immunity genes and risk of childhood leukemia. Hum Immunol. 71(7): 727-730.

\item
Han X, Li Y, Huang J, Zhang Y, Holford T, Lan Q, Rothman N, Zheng T, Kosorok MR, Ma S. (2010) Identification of predictive pathways for non-Hodgkin lymphoma prognosis. Cancer Informatics. 9, 281-292.

\item
Hu W, Bassig BA, Xu J, Zheng T, Zhang Y, Berndt SI, Holford TR, Hosgood HD, Leaderer B, Yeager M, Menashe I, Boyle P, Zou K, Zhu Y, Chanock S, Lan Q, Rothman N. (2012) Polymorphisms in pattern-recognition genes in the innate immunity system and risk of non-Hodgkin lymphoma. Environ Mol Mutagen, In press.

\item
Huang J, Ma S. (2010) Variable selection in the accelerated failure time model via the bridge method. Lifetime Data Analysis. 16, 176-195.

\item
Huang J, Ma S, Xie H, Zhang C. (2009) A group bridge approach for variable selection. Biometrika, 96(2), 339-355.

\item
Huang J, Wei F, Ma S. (2012) Semiparametric regression pursuit. Statistica Sinica. 22: 1403-1426.

\item
Kim C, Zheng T, Lan Q, Chen Y, Foss F, Chen X, Holford T, Leaderer B, Boyle P, Chanock SJ, Rothman N, Zhang Y. (2012) Genetic polymorphisms in oxidative stress pathway genes and modification of BMI and risk of non-Hodgkin lymphoma. Cancer Epidemiol Biomarkers Prev. 21(5):866-868.

\item
Liu J, Huang J, Ma S. (2012) Integrative analysis of cancer diagnosis studies with composite penalization. Scandinavian Journal of Statistics. In press.

\item
Ma S, Huang J, Moran M. (2009) Identification of genes associated with multiple cancers via integrative analysis. BMC Genomics, 10: 535.

\item
Ma S, Zhang Y, Huang J, Han X, Holford T, Lan Q, Rothman N, Boyle P, Zheng T. (2010) Identification of Non-Hodgkin's lymphoma prognosis signatures using the CTGDR method. Bioinformatics. 26, 15-21.

\item
Ma S, Zhang Y, Huang J, Huang Y, Lan Q, Rothman N, Zheng T. (2012) Integrative analysis of cancer prognosis data with multiple subtypes using regularized gradient descent. Genetic Epidemiology. In press.

\item
Mahshid Y, Lisy MR, Wang X, Spanbroek R, Flygare J, Christensson B, Björkholm M, Sander B, Habenicht AJ, Claesson HE. (2009) High expression of 5-lipoxygenase in normal and malignant mantle zone B lymphocytes. BMC Immunol. 10:2.

\item
Ngo VN, Young RM, Schmitz R, Jhavar S, Xiao W, Lim KH, Kohlhammer H, Xu W, Yang Y, Zhao H, Shaffer AL, Romesser P, Wright G, Powell J, Rosenwald A, Muller-Hermelink HK, Ott G, Gascoyne RD, Connors JM, Rimsza LM, Campo E, Jaffe ES, Delabie J, Smeland EB, Fisher RI, Braziel RM, Tubbs RR, Cook JR, Weisenburger DD, Chan WC, Staudt LM. (2011)  Oncogenically active MYD88 mutations in human lymphoma. Nature. 470: 115-119.

\item
Rhodes DR, Yu J, Shanker K, Deshpande N, Varambally R, Ghosh D, Barrette T, Pandey A, Chinnaiyan AM. (2004) Large-scale meta-analysis of cancer microarray data identifies common transcriptional profiles of neoplastic transformation and progression. PNAS. 101(25): 9309-9314.

\item
Stute W. (1996) Distributional convergence under random censorship when covariables are present. Scandinavian Journal of Statistics. 23. 461-471.

\item
Talos F, Mena P, Fingerle-Rowson G, Moll U, Petrenko O. (2005) MIF loss impairs Myc-induced lymphomagenesis. Cell Death Differ. 12(10):1319-1328.

\item
Tseng P. (2001) Convergence of a block coordinate descent method for nondifferentiable minimization. J. Optimization Theory and Applications. 109: 475-494.

\item
van de Geer S. (2008) High-dimensional generalized linear models and the Lasso. Annals of Statistics. 36: 614-645.

\item
van der Vaart AW, Wellner JA. (1996) Weak Convergence and Empirical Processes: with Applications to Statistics. Springer. New York.

\item
Xue Y, Xu H, Rong L, Lu Q, Li J, Tong N, Wang M, Zhang Z, Fang Y. (2010) The MIF-173G/C polymorphism and risk of childhood acute lymphoblastic leukemia in a Chinese population. Leuk Res. 34(10):1282-1286.

\item
Yuan M, Lin Y. (2006) Model selection and estimation in regression with grouped variables. JRSSB. 68: 49-67.

\item
Zhang Y, Dai Y, Zheng T, Ma S. (2011) Risk factors of Non-Hodgkin lymphoma. Expert Opinion on Medical Diagnostics. 5; 539-550.

\item
Zhang Y, Lan Q, Rothman N,  Zhu Y, Zahm SH, Wang SS, Holford TR, Leaderer B, Boyle P, Zhang B, Zou K, Chanock S, Zheng T. (2005) A putative exonic splicing polymorphism in the BCL6 gene and the risk of non-Hodgkin lymphoma. J Natl Cancer Inst 97: 1616-1618.

\end{enumerate}
\end{singlespace}

\clearpage
\begin{table}[!tpb]
\caption{Matrix of regression coefficients for a cancer study with three subtypes, four genes and eight SNPs. An empty cell corresponds to a zero regression coefficient.}
\label{Tab:01}
\centering 
{%
\begin{tabular}{ccccc}
\hline
     &     & \multicolumn{3}{c}{Subtype}\\
Gene & SNP & S1 & S2 & S3 \\ \hline
1    & $1\_1$ & 0.20 & 0.19 & 0.21 \\
     & $1\_2$ & -0.22 &-0.19 &-0.21\\
2    & $2\_1$ & 0.18 & 0.21 & \\
     & $2\_2$ & -0.21 &-0.21 & \\
3    & $3\_1$ &      &      & 0.21\\
     & $3\_2$ &      &      & -0.18\\
4    & $4\_1$ &      &      & \\
     & $4\_2$ &      &      & \\
\hline
\end{tabular}%
}
\end{table}

\begin{table}[!tpb]
\caption{Simulation under the heterogeneity model:
unmatching rate=25$\%$ and nonzero regression coefficients under case 1. In each cell, the first row is number of true positives (standard deviation), and the second row is model size (standard deviation).}
\label{Tab:05}
\centering 
{%
\begin{tabular}{ccccc}
\hline
Correlation & \multicolumn{1}{c}{GLasso} & \multicolumn{3}{c}{Proposed} \\
\cline{2-5}
   &&\multicolumn{1}{c}{$\gamma=0.5$}&\multicolumn{1}{c}{$\gamma=0.7$}
   &\multicolumn{1}{c}{$\gamma=0.9$}\\
\hline
AR $\rho=$0.2&5.7(2.4)&6.7(5.2)&7.4(4.6)&8.9(2.8)\\
&36.4(16.5)&9.5(7.6)&10.7(6.6)&20.1(6.3)\\
AR $\rho=$0.5&6.7(2.4)&10.7(3.0)&10.8(2.8)&11.0(1.9)\\
&39.7(19.3)&14.4(4.5)&14.7(4.0)&17.1(4.5)\\
AR $\rho=$0.8&9.4(2.3)&12.0(0.2)&12.0(0.2)&11.9(0.2)\\
&50.2(19.3)&14.4(1.9)&14.6(1.9)&15.6(2.7)\\
Banded 1&5.3(2.6)&7.8(5.0)&8.6(4.4)&8.9(3.6)\\
&33.8(16.2)&11.0(7.1)&11.9(5.9)&20.5(7.0)\\
Banded 2&7.5(2.8)&10.6(3.3)&11.2(2.2)&11.3(1.7)\\
&45.5(21.7)&14.6(4.8)&15.5(3.7)&21.1(7.0)\\
Banded 3&7.9(2.4)&11.5(1.9)&11.5(1.5)&11.6(1.1)\\
&44.2(17.7)&15.4(3.0)&15.4(2.7)&18.8(5.2)\\
\hline
\end{tabular}%
}
\end{table}

\begin{table}[!tpb]
\caption{Simulation under the heterogeneity model:
unmatching rate=50$\%$ and nonzero regression coefficients under case 1. In each cell, the first row is number of true positives (standard deviation), and the second row is model size (standard deviation).}
\label{Tab:06}
\centering 
{%
\begin{tabular}{ccccc}
\hline
Correlation & \multicolumn{1}{c}{GLasso}  & \multicolumn{3}{c}{Proposed} \\
\cline{2-5}
   & &\multicolumn{1}{c}{$\gamma=0.5$}&\multicolumn{1}{c}{$\gamma=0.7$}
   &\multicolumn{1}{c}{$\gamma=0.9$}\\
\hline
AR $\rho=$0.2&4.6(2.3)&3.7(4.1)&5.8(3.8)&8.3(2.1)\\
&30.7(16.9)&5.4(7.0)&10.3(6.9)&22.2(6.8)\\
AR $\rho=$0.5&6.5(3.0)&9.6(3.5)&9.7(3.1)&10.1(2.3)\\
&36.1(20.1)&16.8(7.5)&17.2(6.9)&21.1(6.8)\\
AR $\rho=$0.8&9.7(2.2)&11.5(0.9)&11.5(0.7)&11.6(0.7)\\
&54.5(17.5)&19.4(3.9)&18.4(4.1)&20.2(5.8)\\
Banded 1&4.7(2.4)&4.4(4.1)&5.9(3.8)&7.7(2.9)\\
&34.5(17.6)&6.7(7.2)&10.8(7.7)&20.8(9.2)\\
Banded 2&7.5(2.5)&8.4(3.9)&9.1(3.1)&10.0(1.7)\\
&47.1(19.4)&14.4(7.7)&15.9(6.1)&23.5(7.1)\\
Banded 3&7.3(2.4)&9.5(2.9)&9.9(2.6)&10.1(2.2)\\
&43.2(19.6)&16.6(6.7)&17.7(5.9)&22.9(7.6)\\
\hline
\end{tabular}%
}
\end{table}

\begin{table}[!tpb]
\caption{Analysis of the NHL data using the proposed approach: $L_2$-norm of estimate for a specific gene; OOI: observed occurrence index.}
\label{cgb}
\centering 
{%
\begin{tabular}{ccccccc}
\hline
Gene & \multicolumn{2}{c}{DLBCL} & \multicolumn{2}{c}{FL} & \multicolumn{2}{c}{CLL/SLL} \\
\cline{2-7}
   & $L_2$-norm & OOI & $L_2$-norm & OOI & $L_2$-norm &OOI\\
\hline
ALOX12	&		&		&	0.02	&	0.83	&		&		\\
ALOX15B	&	0.01	&	0.76	&		&		&		&		\\
ALOX5	&	0.02	&	0.71	&		&		&	0.01	&	0.64	\\
CLCA1	&		&		&	0.02	&	0.83	&		&		\\
CSF2	&	0.02	&	0.86	&		&		&		&		\\
DEFB1	&	0.03	&	0.97	&		&		&		&		\\
IL10	&	1.E-04	&	0.21	&		&		&	0.02	&	0.62	\\
IL17C	&		&		&	0.02	&	0.78	&		&		\\
IRAK2	&		&		&	0.02	&	0.88	&	0.01	&	0.80	\\
LIG4	&		&		&	0.01	&	0.87	&		&		\\
LMAN1	&	0.02	&	0.71	&	0.02	&	0.77	&		&		\\
MBP	&	0.01	&	0.68	&	4.E-03	&	0.60	&	0.04	&	0.68	\\
MCP	&	0.01	&	0.83	&		&		&		&		\\
MEFV	&	0.02	&	0.85	&		&		&		&		\\
MIF	&	0.02	&	0.83	&	1.E-03	&	0.53	&		&		\\
MUC6	&	0.03	&	0.99	&		&		&		&		\\
NCF4	&	0.01	&	0.64	&	0.01	&	0.64	&		&		\\
PTK9L	&		&		&	0.01	&	0.66	&		&		\\
SERPINB3	&	0.01	&	0.55	&		&		&		&		\\
SOD3	&		&		&	4.E-03	&	0.47	&		&		\\
STAT4	&	0.02	&	0.95	&	0.01	&	0.88	&	0.01	&	0.91	\\
\hline
\end{tabular}%
}
\end{table}

\clearpage
\setcounter{page}{1}
\section{Appendix}

\subsection{Statistical properties and proofs}
In this section, we use the same notations as in Section 2 and 3.1. The PCA step described in Section 3.4 is optional and mainly for practical consideration. If PCA is actually conducted, the $d_{jk}$ values may be smaller.

\begin{proof}[Proof of Proposition~\ref{prop1}]
We have $\min_{\beta,\theta}S(\beta,\theta)=\min_{\beta}\hat{S}(\beta)$,\\ where $\hat{S}(\beta)=\min_{\theta}\{S(\beta,\theta):\theta\ge 0\}$. For any $\beta$,
\begin{equation}
 \hat\theta(\beta)\equiv\argmin\{S(\beta,\theta):\theta\ge 0\}=\argmin\left\{\sum_{j=1}^J \theta_j^{1-1/\gamma}c_j^{1/\gamma}\left(\sum_{k=1}^{M_j} \sqrt{d_{jk}}\left \Vert\beta_{jk}\right \Vert\right) +\tau_n\sum_{j=1}^J\theta_j\right\}. \notag
\end{equation}
Therefore, $\hat\theta(\beta)=(\hat\theta_1(\beta),\dots,\hat\theta_J(\beta))^{\prime}$ satisfies
\begin{equation}
  (1/\gamma-1)\hat\theta_j^{-1/\gamma}(\beta) c_j^{1/\gamma}\sum_{k=1}^{M_j} \sqrt{d_{jk}}\left \Vert\beta_{jk}\right \Vert=\tau_n,\quad(j=1,\dots,J). \notag
\end{equation}
Write $\hat{S}(\beta)=S(\beta,\hat\theta(\beta))$, and substitute the expressions
\begin{equation}
\hat\theta_j(\beta)=c_j\left(\frac{\gamma}{1-\gamma} \right)^{1-\gamma}\left(\sum_{k=1}^{M_j} \sqrt{d_{jk}}\left \Vert\beta_{jk}\right \Vert \right)^{\gamma},\quad \hat\theta_j^{1-1/\gamma}(\beta)=\left( \frac{\gamma}{1-\gamma}\right)\frac{c_j^{1-1/\gamma}
\tau_n^{1-\gamma}}{\left(\sum_{k=1}^{M_j} \sqrt{d_{jk}}\left \Vert\beta_{jk}\right \Vert \right)^{1-\gamma}} \notag
\end{equation}
into $S(\beta,\hat\theta(\beta))$. After some algebra, we obtain
\begin{equation}
\hat{S}(\beta)=\frac{1}{2n}\left \Vert Y-X\beta \right \Vert^2+\lambda_n\sum_{j=1}^Jc_j\left(\sum_{k=1}^{M_j} \sqrt{d_{jk}}\left \Vert\beta_{jk}\right \Vert \right)^{\gamma}. \notag
\end{equation}
Then if $\lambda_n = \tau_n^{1-\gamma}\gamma^{-\gamma}(1-\gamma)^{\gamma-1}$,
$\hat{S}(\beta)$ and the objective function defined in Section 3.1 are equivalent.
\end{proof}


Let $B_1$ and $B_2$ be the index sets of genes with nonzero and zero norms of regression coefficients, respectively. 
Let $\beta_0$ be the true parameter vector of $\beta$, $q^m$ be the size of the set $\{j:\lVert\beta_{0jm}\rVert\ne 0\}$, and $q=\sum_{m=1}^M q^m$ (that is, the total number of associations between genes and prognosis across all subtypes). Without loss of generality, suppose that $\lVert\beta_{0jk}\rVert\ne 0$ for all $j=1,\dots,J_1$ and some $k=1,\dots, M_j$. 
With $X$ as the full design matrix (defined in Section 3.2), denote $X_{B1}$ as the design matrix corresponding to $B_1$.
Define $\Sigma_n =n^{-1}X^{\prime}X,~\Sigma_{1n}=n^{-1}X_{B1}^{\prime} X_{B1}$. Further for any index set $A$, let $X_A$ denote the corresponding design matrix, and $\Sigma_A=n^{-1}X_A^{\prime}X_A$. Let $A_1=\{(j,k):\lVert \hat\beta_{0jk}\rVert\ne 0\}$.

We make the following assumptions:
(A1) $M$ is finite. $q$ is finite. For subtype $m~(=1,\ldots, M)$, $\{(Y_{oi}^m,\delta_i^m, X_{oi}^m), i=1,\ldots, n^m \}$ are iid. The errors $\{\epsilon_1^m, \ldots, \epsilon_{n^m}^m\}$ are iid with mean zero and finite variance $(\sigma^{m})^2$. Denote $\epsilon^m=(\epsilon_1^m, \ldots, \epsilon_{n^m}^m)^{\prime}$, and $\epsilon=(\epsilon^{1\prime}, \ldots, \epsilon^{M\prime})^{\prime}$.
Denote $\sigma^2 = \max_m (\sigma^{m})^2$.
The $i$th component of $\epsilon^m$, $\epsilon_i^m$, is subgaussian,
in the sense that there exist $K_1,K_2>0$ such that the tail probability satisfies
$P(|\epsilon_i^m|>u)\le K_2\mathrm{exp}(-K_1 u^2)$ for all $u\ge0$. (A2) For subtype $m~(=1,\ldots, M)$, the errors $(\epsilon_{1}^m, \ldots, \epsilon_{n^m}^m)$ are independent of the Kaplan-Meier weights $(\omega_1^m, \ldots, \omega_{n^m}^m)$. The covariates are bounded. 
(A3) 
The design matrix satisfies the sparse Riesz condition (SRC) with rank $q^*$. That is, there exist constants $0<c_*<c^*<\infty$, such that for $q^*=(3+4C)q$ and $C=c^*/c_*$, with probability converging to 1, $c_*\le \frac{\nu^{\prime}\Sigma_A \nu}{\lVert \nu \rVert^2} \leq c^*$, $\forall A$ with $\lvert A \rvert =q^*$ and $\nu\in \mathbb{R}^{q^*}$. In addition, the SRC condition holds for the design matrix of each subtype separately with rank $q^{m*}$ for subtype $m$.
(A4) $\lambda_n(\mathrm{log}d/n)^{\gamma/2-1}\rightarrow \infty$. Let $\eta_n=\max_j c_j \left(\sum_{k=1}^{M_j}\sqrt{d_{jk}}\lVert \beta_{0jk}\rVert\right)^{\gamma-1}$. $\sum_{j=1}^J \sum_{k=1}^{M_j}\sqrt{d_{jk}}\lVert \beta_{0jk}\rVert=O(1)$ and $\eta_n=O(1)$.

The above assumptions are in parallel with those in Huang and Ma (2010). Further complications are introduced to accommodate the multi-datatsets setting, heterogeneity across subtypes, and ``SNP-within-gene" structure. Main properties of the penalized estimate can be summarized as follows.

\begin{thm}\label{theorem1}
Suppose the (A1)--(A3) hold and $\lambda_n\ge O(1)\sqrt{\mathrm{log(d)/n}}$. Then
\begin{enumerate}
  \item With probability converging to 1, $\lvert A_1\rvert\le (2+4C)q$.
  \item  $\lVert\hat\beta-\beta_0 \rVert^2\le \frac{256\lambda_n\eta_n^2}{c_*^2} + O_p\left(\frac{\lvert A_1\rvert\mathrm{log}d}{nc_*^2}\right)$. In particular, if $\lambda_n= O(1)\sqrt{\mathrm{log(d)/n}}$, then $\lVert\hat\beta-\beta_0 \rVert^2=O_p\left(\mathrm{log}d/n\right)$.
\end{enumerate}
\end{thm}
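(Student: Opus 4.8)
The plan is to combine the minimizing characterization of $\hat\beta$ with the sparse Riesz condition (A3) and a maximal inequality for the subgaussian noise, adapting the group-bridge arguments of Huang et al.\ (2009) and Huang and Ma (2010) to the block-diagonal design $X=\mathrm{diag}(X^1,\dots,X^M)$ and the composite penalty. First I would record the fundamental inequality. Writing $Y=X\beta_0+\tilde\epsilon$, where by Stute (1996) and (A2) the weighted residual vector $\tilde\epsilon$ inherits subgaussian tails from $\epsilon$, the fact that $\hat\beta$ minimizes the objective (\ref{obj}) gives
\[
\frac{1}{2n}\|X(\hat\beta-\beta_0)\|^2 \le \frac{1}{n}\tilde\epsilon^{\prime}X(\hat\beta-\beta_0) + P_{\lambda_n,\gamma}(\beta_0)-P_{\lambda_n,\gamma}(\hat\beta).
\]
A union bound over the $d$ columns of $X$, using (A1)--(A2), then shows that on an event of probability tending to one a maximal inequality yields, uniformly over blocks, $\|n^{-1}X_{jk}^{\prime}\tilde\epsilon\|=O_p(\sqrt{d_{jk}\,\log d/n})$, so that the cross term is dominated by $O_p(\sqrt{\log d/n})\sum_{j,k}\sqrt{d_{jk}}\,\|\hat\beta_{jk}-\beta_{0jk}\|$.

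Second, to prove $|A_1|\le(2+4C)q$ I would use the stationarity conditions at $\hat\beta$: for every selected block $(j,k)\in A_1$,
\[
n^{-1}X_{jk}^{\prime}(Y-X\hat\beta)=\lambda_n\gamma c_j\Big(\sum_{k^{\prime}}\sqrt{d_{jk^{\prime}}}\,\|\hat\beta_{jk^{\prime}}\|\Big)^{\gamma-1}\sqrt{d_{jk}}\,\frac{\hat\beta_{jk}}{\|\hat\beta_{jk}\|}.
\]
This forces the residual $Y-X\hat\beta$ to carry a definite correlation with each selected block. Bounding the number of (nearly orthogonal) blocks that can simultaneously sustain such correlation is a counting argument: projecting the residual onto $X_{A_1}$ and invoking the lower SRC eigenvalue $c_*$ over index sets of size at most $q^*=(3+4C)q$ bounds $|A_1|$ above by $(2+4C)q$, where the constant $C=c^*/c_*$ and the rank $q^*$ are exactly what make the inequality close.

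Third, with $|A_1|\le(2+4C)q$ in hand, the error vector $\hat\beta-\beta_0$ is supported on $A_1$ together with the true active set (of size $q$), of total cardinality at most $(2+4C)q+q=q^*$; this is precisely why $q^*$ is chosen as $(3+4C)q$, for it lets the lower SRC eigenvalue act directly on $\hat\beta-\beta_0$, giving $\|X(\hat\beta-\beta_0)\|^2\ge n c_*\|\hat\beta-\beta_0\|^2$. Substituting this into the fundamental inequality, bounding $P_{\lambda_n,\gamma}(\beta_0)-P_{\lambda_n,\gamma}(\hat\beta)$ by concavity of $t\mapsto t^{\gamma}$ (whose increment is controlled by the subgradient $\gamma t^{\gamma-1}$ evaluated at the true signal, i.e.\ by $\eta_n$), and controlling the cross term by the maximal inequality together with Cauchy--Schwarz over $A_1$ and the true active set, I arrive at a self-bounding inequality of the form
\[
\tfrac{c_*}{2}\|\hat\beta-\beta_0\|^2\le\big(O(1)\lambda_n\eta_n+O_p(\sqrt{|A_1|\log d/n})\big)\|\hat\beta-\beta_0\|.
\]
Solving this quadratic inequality yields the two-term bound of the statement, with bias contribution of order $\lambda_n^{2}\eta_n^{2}/c_*^{2}$ and stochastic term of order $|A_1|\log d/(nc_*^{2})$; under the scaling $\lambda_n=O(1)\sqrt{\log d/n}$ with $\eta_n=O(1)$ and $|A_1|=O_p(1)$ (since $q$ is finite), both terms are $O_p(\log d/n)$, giving the stated rate.

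The main obstacle is the sparsity bound of the first part. Unlike the Lasso, the group-bridge subgradient $\lambda_n\gamma c_j\big(\sum_{k^{\prime}}\sqrt{d_{jk^{\prime}}}\,\|\hat\beta_{jk^{\prime}}\|\big)^{\gamma-1}$ is not uniform: its size depends on the gene-level norm, which can be small for a barely selected gene, so the residual-correlation lower bound that drives the counting degrades there. Handling blocks with small gene-level norm (where $\gamma-1<0$ makes the derivative large) requires the delicate splitting of the selected set used in Huang et al.\ (2009), and it is exactly here that the SRC and the precise rank $q^*=(3+4C)q$ must be deployed with care to bound both the number and the aggregate contribution of the false selections.
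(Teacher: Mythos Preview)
Your proposal is correct and follows essentially the same route as the paper: part (i) is handled there by citing Zhang and Huang (2008), which is precisely the KKT-plus-SRC counting argument you outline, and part (ii) proceeds via the same basic inequality, the concavity bound $b^\gamma-a^\gamma\le 2(b-a)b^{\gamma-1}$ to extract $\eta_n$, the lower SRC eigenvalue on the combined support $B=B_1\cup A_1$ of size at most $q^*$, and a subgaussian maximal inequality (their Lemma~\ref{lemma0}) for the noise. The only cosmetic difference is that the paper controls the cross term by projecting the residual $\zeta=Y-X\beta_0$ onto $\mathrm{span}(X_B)$ and bounding $|\zeta'\eta_B|\le\|\zeta_B\|\cdot\|\eta_B\|$ before invoking the columnwise maximum, whereas you bound columnwise first and then apply Cauchy--Schwarz over the support; both yield the same final rate.
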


The above result establishes that the number of identified genes is a finite multiply of the true number of associated genes, which is assumed to be finite in (A1). In addition, if $\log(d)/n \to 0$, the estimate is $L_2$ estimation consistent. The selection and estimation results can be further strengthened as follows.

\begin{thm}\label{theorem2}
Suppose that (A1)--(A4) hold.
\begin{enumerate}
\item It holds that $\hat\beta_{B_2}$=0 with probability converging to 1.
\item
Suppose that $\{B_1,\beta_{0B_1}\}$ are fixed unknown. In the AFT models, the subgaussian conditions are strengthened to normal distributions.
$\Sigma_{\omega1n^m}^m=(n^m)^{-1/2}(X_{\omega B_1^m}^m)^{\prime}X_{\omega B_1^m}^m\rightarrow \Sigma_{\omega1}^m$ and $n^{-1/2}X_{B_1}^{\prime}\epsilon=\sum_{m=1}^M (n^m)^{-1/2}(X_{\omega B_1^m}^m)^{\prime}\epsilon^m \rightarrow Z \sim N(0,\sum_{m=1}^M (\sigma^m)^2\Sigma_{\omega1}^m)$ in distribution.

Then, in distribution,
\begin{equation}
  \sqrt{n}(\hat\beta_{B_1} - \beta_{0B_1})\rightarrow \argmin \{V_1{(u)}:u\in R^{|B_1|}\}, \notag
\end{equation}
where
\begin{eqnarray}
V_1(u)&=&-2u^{\prime}Z+u^{\prime}\Sigma_1u+ \gamma\lambda_0\sum_{j=1}^{|B_1|} c_j\left(\sum_{k=1}^{M_j}\sqrt{d_{jk}}\Vert\beta_{0jk}\Vert \right)^{\gamma-1} \sum_{k=1}^{M_j} \{ u_{jk}\frac{\beta_{0jk}}{\Vert\beta_{0jk}\Vert}
\mathrm{I}(\Vert\beta_{0jk}\ne0\Vert)\notag \\
  & & +\Vert u_{jk}\Vert I(\Vert\beta_{0jk}=0)\Vert\}, \notag
\end{eqnarray}
with $u_{jk}$ corresponding to the component of gene $j$ and subtype $k$.
\end{enumerate}
\end{thm}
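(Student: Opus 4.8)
The plan is to establish the two parts in sequence: first the exact sparsity of $\hat\beta$ on $B_2$, and then the limiting distribution of the $\sqrt n$-scaled estimator on the retained support $B_1$; throughout I would lean on the $L_2$-consistency rate $\|\hat\beta-\beta_0\|=O_p(\sqrt{\log d/n})$ and the sparsity bound $|A_1|\le(2+4C)q$ already delivered by Theorem~\ref{theorem1}.

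For part~1 I would work from the stationarity (KKT) conditions satisfied by the computed local minimizer $\hat\beta$, which avoids any appeal to global convexity. Fix a gene $j\in B_2$, so $\beta_{0jk}=0$ for every $k$, and suppose toward a contradiction that $\hat\beta_{jk}\ne0$ for some $k$. The block stationarity condition reads
\begin{equation}
\frac{1}{n}X_{jk}^{\prime}(X\hat\beta-Y)+\lambda_n c_j\gamma\Big(\sum_{k'=1}^{M_j}\sqrt{d_{jk'}}\|\hat\beta_{jk'}\|\Big)^{\gamma-1}\sqrt{d_{jk}}\,\frac{\hat\beta_{jk}}{\|\hat\beta_{jk}\|}=0. \notag
\end{equation}
Decomposing $Y-X\hat\beta=\epsilon+X(\beta_0-\hat\beta)$, I would bound the data-fit term $\frac1n\|X_{jk}^{\prime}(Y-X\hat\beta)\|$ by $\frac1n\|X_{jk}^{\prime}\epsilon\|+\frac1n\|X_{jk}^{\prime}X(\beta_0-\hat\beta)\|$: the first piece is $O_p(\sqrt{\log d/n})$ by the subgaussian maximal inequality of (A1), and the second is controlled through the SRC of (A3) together with the consistency rate from Theorem~\ref{theorem1}. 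Since $\gamma-1<0$, the penalty term diverges like $\lambda_n(\sum_{k'}\sqrt{d_{jk'}}\|\hat\beta_{jk'}\|)^{\gamma-1}$ as the gene-$j$ coefficients vanish, so combining the two bounds the stationarity identity cannot hold once $\lambda_n(\log d/n)^{\gamma/2-1}\to\infty$, which is exactly (A4). This contradiction forces $\hat\beta_{B_2}=0$ with probability tending to one.

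For part~2 I would reparametrize by $u=\sqrt n(\beta_{B_1}-\beta_{0B_1})$, use part~1 to fix the $B_2$ blocks at zero with probability tending to one, and study the recentered, rescaled criterion $V_n(u)=2n\{Q_n(\beta_{0B_1}+u/\sqrt n)-Q_n(\beta_{0B_1})\}$, where $Q_n$ is the objective in (\ref{obj}) restricted to $B_1$. Using the residual decomposition $Y-X_{B_1}\beta_{0B_1}=\epsilon$, expanding the quadratic part yields $-2Z_n^{\prime}u+u^{\prime}\Sigma_{1n}u$ with $Z_n=n^{-1/2}X_{B_1}^{\prime}\epsilon$, which converges to the linear-plus-quadratic part of $V_1$ by the assumed limits $Z_n\to Z$ and $\Sigma_{1n}\to\Sigma_1$. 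For the penalty I would Taylor-expand each factor $\big(\sum_k\sqrt{d_{jk}}\|\beta_{0jk}+u_{jk}/\sqrt n\|\big)^\gamma$ about $\beta_{0j}$: blocks with $\beta_{0jk}\ne0$ contribute the smooth directional derivative $u_{jk}^{\prime}\beta_{0jk}/\|\beta_{0jk}\|$, blocks with $\beta_{0jk}=0$ contribute the nonsmooth term $\|u_{jk}\|$, and the chain rule through $x\mapsto x^\gamma$ produces the factor $\gamma\big(\sum_k\sqrt{d_{jk}}\|\beta_{0jk}\|\big)^{\gamma-1}$. Scaling by $2n$ turns the prefactor $\sqrt n\,\lambda_n$ into $\lambda_0$ and reproduces the penalty term of $V_1(u)$; hence $V_n$ converges to $V_1$ finite-dimensionally.

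To pass from convergence of $V_n$ to convergence of its minimizer I would invoke the convexity/epi-convergence device of Geyer (1994) and Knight and Fu (2000): each $V_n$ is convex in $u$ (a quadratic plus a sum of norms), so is $V_1$, and the SRC on the $B_1$ design makes $\Sigma_1$ positive definite, so $\argmin V_1$ is unique; convexity then upgrades the finite-dimensional convergence of $V_n$ to convergence of argmins, giving $\sqrt n(\hat\beta_{B_1}-\beta_{0B_1})\to\argmin V_1(u)$ in distribution. The main obstacles I anticipate are twofold: in part~1, controlling the cross term $\frac1nX_{jk}^{\prime}X(\beta_0-\hat\beta)$ uniformly over the random selected set while exploiting the blow-up of the bridge derivative despite the nonconvex penalty; and in part~2, the nondifferentiability of the penalty at the zero blocks nested inside $B_1$ genes, which precludes a classical smooth M-estimation expansion and makes the convexity-based argmin theorem the essential tool rather than a convenience.
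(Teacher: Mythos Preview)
For Part~(i), your direct KKT contradiction is a valid alternative to the paper's argument, but the routes differ. The paper defines an auxiliary $\tilde\beta$ (equal to $\hat\beta$ on $B_1$, zero on $B_2$), multiplies the KKT identity through by $\hat\beta-\tilde\beta$, and combines the result with the basic minimizer inequality $L(\hat\beta)+P(\hat\beta)\le L(\tilde\beta)+P(\tilde\beta)$ together with the concavity bound $\gamma b^{\gamma-1}(b-a)\le b^\gamma-a^\gamma$; this yields $(1-\gamma)\lambda_n\sum_{j>J_1}c_j(\sum_k\sqrt{d_{jk}}\|\hat\beta_{jk}\|)^\gamma\le\tfrac{c^*}{2}\|\hat\beta_{B_2}\|^2$, and the contradiction follows by bounding the left side below by $\|\hat\beta_{B_2}\|^\gamma$. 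Your version instead bounds the two sides of the block KKT equation separately and leans on the estimation rate $\|\hat\beta_{jk}\|\le\|\hat\beta-\beta_0\|=O_p(\sqrt{\log d/n})$ to force the penalty side to blow up. Both land on the same rate comparison $\lambda_n(\log d/n)^{\gamma/2-1}\to\infty$; the paper's detour through $\tilde\beta$ exploits the global minimizer property rather than mere local stationarity, which is arguably more robust given the nonconvex penalty, while your route is shorter.

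For Part~(ii) your overall plan matches the paper's, but your stated justification for the argmin convergence has a real gap: $V_n$ is \emph{not} convex in $u$. The penalty contribution to $V_n$ has the form $\big(\sum_k\sqrt{d_{jk}}\|\beta_{0jk}+u_{jk}/\sqrt n\|\big)^\gamma$ with $0<\gamma<1$, which is the concave power of a convex function and is generally nonconvex (the whole point of the bridge exponent is to break convexity and obtain selection). So the Geyer/Knight--Fu convexity device does not apply to $V_n$; it applies only to the limit $V_1$, which is indeed convex (quadratic plus linear plus norms). To repair the argument you need a nonconvex argmin theorem: first establish tightness of $\hat u_n=\sqrt n(\hat\beta_{B_1}-\beta_{0B_1})$ directly---once $\hat\beta_{B_2}=0$ the restricted problem is fixed-dimensional with $\{B_1,\beta_{0B_1}\}$ fixed, and $\sqrt n$-consistency follows by standard arguments---then show $V_n\to V_1$ uniformly on compacta, and finally invoke uniqueness of $\argmin V_1$ via positive-definiteness of $\Sigma_1$. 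Without replacing the convexity shortcut by this tightness step, your Part~(ii) argument does not go through.
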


The above result establishes that under mild conditions, the proposed approach can consistently identify genes that are associated with at least one subtype. This result is consistent with that in Huang et al. (2009) for the analysis of a single dataset. In addition, when the random errors have normal distributions, the asymptotic distribution can be rigorously established. To prove the above theorems, we first establish the following lemma.

Let $\zeta^m=(\zeta_1^m,\dots,\zeta_n^m)^{\prime}$, where $\zeta_i^m=\omega_i^m \epsilon_{i}^m\equiv \omega_i^m (Y_{o(i)}^m - X_{o(i)}^{m\prime}\beta_0^m)$.

\begin{lem}\label{lemma0}
Suppose that assumption (A2) and (A3) hold. Let $\xi_j^m=X_{j}^{m\prime}\zeta^m,1\le j\le d$, where $X_j^m$ is the $j$th column of $X^m$.
Let $\xi_{n}^m=\mathrm{max}_{1\le j\le d} |\xi_j^m|$. Then.
 \begin{equation*}
   \mathrm{E} (\xi_n^m)\le C_1\sqrt{\mathrm{log}(d)}\left( \sqrt{2C_2n^m\mathrm{log}(d)}+4\mathrm{log}(2d)+C_2n^m\right),
 \end{equation*}
 where $C_1$ and $C_2$ are two positive constants. In particular, when $\mathrm{log}(d)/n^m \rightarrow 0$,
 \begin{equation*}
  \mathrm{E} (\xi_n^m)=O(1)\sqrt{n^m\mathrm{log}(d)}.
 \end{equation*}
\end{lem}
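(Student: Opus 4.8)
The plan is to run the whole argument conditionally on the Kaplan--Meier weights $(\omega_1^m,\dots,\omega_{n^m}^m)$, which is legitimate because (A2) makes the errors independent of the weights. Conditional on the weights and on the (bounded) covariates, $\xi_j^m=\sum_{i=1}^{n^m} X_{ij}^m\omega_i^m\epsilon_i^m$ is a linear combination of the independent, mean-zero subgaussian errors $\epsilon_i^m$. The first step is to record that a weighted sum of independent subgaussian variables is itself subgaussian, so that $\xi_j^m$ is conditionally subgaussian with a variance proxy controlled by $\sum_i (X_{ij}^m)^2$. Here it is convenient to use that $|\zeta_i^m|\le|\epsilon_i^m|$ since $\omega_i^m\in[0,1]$, so the subgaussian tail constants $K_1,K_2$ of (A1) pass to $\zeta_i^m$ without picking up the weights; this isolates the constant $C_2$.

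The second step is to control the variance proxy uniformly in $j$. Boundedness of the covariates in (A2) together with the sparse Riesz condition (A3), which keeps the column norms of the design matrix in check, bounds $\sum_i (X_{ij}^m)^2$ at order $n^m$; this is the source of the additive $C_2 n^m$ term. Because the relevant proxy also involves the data-dependent squares $(\epsilon_i^m)^2$, which are only subexponential, I would invoke a Bernstein-type inequality for sums of subexponential random variables to show that this empirical quantity concentrates around its mean of order $n^m$, with a subgaussian fluctuation of order $\sqrt{n^m\log d}$ and a subexponential correction of order $\log(2d)$. Requiring the bound to hold simultaneously over all $d$ coordinates through a union bound is exactly what assembles the three summands $\sqrt{2C_2 n^m\log d}+4\log(2d)+C_2 n^m$.

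The third step is to pass from the coordinatewise control to $\max_{1\le j\le d}|\xi_j^m|$ by the standard maximal inequality for subgaussian families --- the Orlicz-norm bound of van der Vaart and Wellner (1996), used in the same spirit as van de Geer (2008) --- which supplies the leading $\sqrt{\log d}$ factor and the constant $C_1$. Combining the conditional maximal bound with the variance-proxy bound and then removing the conditioning by the tower property gives the stated finite-sample inequality, and the asymptotic conclusion follows by identifying the dominant summand once $\log d/n^m\to0$.

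The main obstacle is the second step: the variance proxy is genuinely random because it couples the intricate dependence structure of the Kaplan--Meier weights with the squared errors. Conditioning on the weights (justified by (A2)) decouples them from the error randomness, but one must still control the empirical proxy uniformly in $j$, and it is precisely here that the subexponential concentration and the union-bound cost have to be balanced so that neither the $n^m$-scale mean nor the $\log d$ inflation is lost. By contrast, the subgaussian closure under linear combinations and the maximal inequality are routine once the proxy has been pinned down.
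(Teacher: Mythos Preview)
Your overall architecture is the same as the paper's: condition on the design (covariates and Kaplan--Meier weights), observe that each $\xi_j^m$ is conditionally subgaussian with proxy $s_{n^m j}^2=X_j^{m\prime}X_j^m$, apply the van der Vaart--Wellner maximal inequality to obtain $\mathrm{E}\bigl(\max_j|\xi_j^m|\,\big|\,X\bigr)\le C_1 s_{n^m}\sqrt{\log d}$ with $s_{n^m}^2=\max_j s_{n^m j}^2$, and then take expectations to reach $\mathrm{E}(\xi_n^m)\le C_1\sqrt{\log d}\,\mathrm{E}(s_{n^m})$. Up to this point you are exactly in line with the paper.

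The gap is in your second step. You write that ``the relevant proxy also involves the data-dependent squares $(\epsilon_i^m)^2$'' and reach for a Bernstein inequality for subexponentials. But the subgaussian variance proxy of a linear form $\sum_i a_i\epsilon_i^m$ in independent subgaussian $\epsilon_i^m$ is proportional to $\sum_i a_i^2$; it does \emph{not} contain the realized squares $(\epsilon_i^m)^2$. Once you condition on covariates and weights, $s_{n^m}^2=\max_j\sum_i X_{ji}^{m\,2}$ is a function of the design alone, and when you de-condition the only randomness left in it comes from the design entries themselves. By (A2) those entries are bounded (covariates are bounded and $\omega_i^m\in[0,1]$), not subexponential. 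The paper accordingly controls $\mathrm{E}(s_{n^m}^2)$ via Lemma~4.2 of van de Geer (2008) for maxima of sums of \emph{bounded} centered variables: with $\max_j\sum_i \mathrm{E}X_{ji}^{m\,2}\le C_2 n^m$ and $\sum_i\mathrm{E}\bigl(X_{ji}^{m\,2}-\mathrm{E}X_{ji}^{m\,2}\bigr)^2\le 4C_2 n^m$, that lemma yields the fluctuation term $\sqrt{2C_2 n^m\log d}+4\log(2d)$, and adding back the mean produces the three summands. So the three-term expression records the mean and the uniform fluctuation of the \emph{column norms of the design}, not of the squared errors; your Bernstein step is aimed at the wrong quantity. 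Your alternative idea of invoking (A3) to cap the column norms at order $n^m$ is actually closer to a valid shortcut---it would deliver the $O(\sqrt{n^m\log d})$ conclusion directly---but it bypasses the finite-sample three-term display rather than explaining it.
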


\begin{proof}[Proof of Lemma~\ref{lemma0}]
Let $s_{n^mj}^2=X_j^{m\prime}X_j^m$. Conditional on $X_j^m$'s, assumption (A2) and (A3) imply that $\xi_j^m$'s are subgaussian. Let $s_{n^m}^2=\mathrm{max}_{1\le j\le d} s_{n^mj}^2$. By (A2) and the maximal inequality for subgaussian random variables (Van der Varrt and Wellner 1996, Lemmas 2.2.1 and 2.2.2),
\begin{equation*}
 \mathrm{E}\left(\quad\underaccent{1\le j\le d}{\mathrm{max}}\quad |\xi_j^m| \Big| X_j^m,1\le j\le d \right)\le C_1s_{n^m}\sqrt{\mathrm{log}(d)},
\end{equation*}
for a constant $C_1>0$. Therefore,
\begin{equation}
 \label{ineq0}
  \mathrm{E}\left(\quad\underaccent{1\le j\le d}{\mathrm{max}}\quad|\xi_j^m|\right)\le C_1\sqrt{\mathrm{log}(d)}\mathrm{E}(s_{n^m}).
\end{equation}
Since
\begin{equation}
\label{ineq1}
\sum_{i=1}^{n^m}\mathrm{E}\left( X_{j(i)}^{m2}- \mathrm{E}X_{j(i)}^{m2} \right)^2 \le 4C_2n^m,
\end{equation}
and
\begin{equation}
\label{ineq2}
\underaccent{1\le j\le d}{\mathrm{max}}\sum_{i=1}^{n^m} \mathrm{E}X_{j(i)}^{m2} \le C_2n^m.
\end{equation}
By Lemma 4.2 of Van der Geer (2008), (\ref{ineq1}) implies
\begin{equation}
  \mathrm{E}\left( \underaccent{1\le j\le d}{\mathrm{max}}\left\vert \sum_{i=1}^2(X_{j(i)}^{m2}-\mathrm{E}X_{j(i)}^{m2})\right\vert\right)\le \sqrt{2C_2n^m\mathrm{log}(d)}+4\mathrm{log}(2d).
\end{equation}
Therefore, by (\ref{ineq2}) and the triangle inequality,
\begin{equation*}
  \mathrm{E} s_{n^m}^2 \le \sqrt{2C_2n^m\mathrm{log}(d)}+4\mathrm{log}(2d)+C_2n^m.
\end{equation*}
Now since $\mathrm{E}s_{n^m}\le (\mathrm{E}s_{n^m}^2)^{1/2}$, we have
\begin{equation}
\label{ineq3}
 \mathrm{E} s_{n^m} \le \left( \sqrt{2C_2n^m\mathrm{log}(d)}+4\mathrm{log}(2d)+C_2n^m\right)^{1/2}.
\end{equation}
The lemma follows from (\ref{ineq0}) and (\ref{ineq3}).
\end{proof}


\begin{proof}[Proof of Theorem~\ref{theorem1}]
Part (i) follows from the proof of Theorem 1 of Zhang and Huang (2008). One difference is that here a subgaussian assumption is made, which is weaker than the normality assumption in Zhang and Huang (2008). Since subgaussian random variables have the same tail behaviors as normal random variables, the argument of Zhang and Huang (2008) goes through. In addition, the SRC condition is imposed at a group level, as opposed to individual covariate level, to accommodate the ``SNP-within-gene" structure.

(ii) By the definition of $\hat\beta$,
\begin{equation*}
\frac{1}{2n}\lVert Y-X\hat\beta\rVert^2 +\lambda_n\sum_{j=1}^J c_j \left(\sum_{k=1}^{M_j}\sqrt{d_{jk}}\lVert \hat\beta_{jk}\rVert\right)^{\gamma} \le  \frac{1}{2n}\lVert Y-X\beta_0\rVert^2 +\lambda_n\sum_{j=1}^J c_j \left(\sum_{k=1}^{M_j}\sqrt{d_{jk}}\lVert \beta_{0jk}\rVert\right)^{\gamma}.
\end{equation*}
Thus
\begin{equation*}
\frac{1}{2n}\lVert Y-X\hat\beta\rVert^2 +\lambda_n\sum_{j=1}^{J_1} c_j \left(\sum_{k=1}^{M_j}\sqrt{d_{jk}}\lVert \hat\beta_{jk}\rVert\right)^{\gamma} \le  \frac{1}{2n}\lVert Y-X\beta_0\rVert^2 +\lambda_n\sum_{j=1}^{J_1} c_j \left(\sum_{k=1}^{M_j}\sqrt{d_{jk}}\lVert \beta_{0jk}\rVert\right)^{\gamma}.
\end{equation*}
Using $\zeta=Y-X\beta_0$, we have
\begin{equation}
\frac{1}{2n}\lVert Y-X\hat\beta\rVert^2-\frac{1}{2n}\lVert Y-X\beta_0\rVert^2 =\frac{1}{2n}\lVert X(\hat\beta-\beta_0)\rVert^2-\frac{1}{n}\zeta^{\prime}X(\hat\beta-\beta_0). \label{eqn1}
\end{equation}
Let $B=B_1\cup A_1=\{(j,k):\lVert\beta_{0jk}\rVert\ne 0\quad \mbox{or}\quad \lVert\hat\beta_{jk}\rVert\ne 0\}$. Note that $\lvert B\rvert\le q^*$ with probability converging to 1 by part (i), where $q^*$ is defined in (A3). Denote $\eta_B=X_B(\hat\beta_B-\beta_{0B})$.

Since $b^{\gamma}-a^{\gamma}\le 2(b-a)b^{\gamma-1}$ for $0\le a\le b$,
\begin{eqnarray}
\label{ineq4}
& & \lambda_n\sum_{j=1}^{J_1} c_j (\sum_{k=1}^{M_j}\sqrt{d_{jk}}\lVert \beta_{0jk}\rVert)^{\gamma} -\lambda_n\sum_{j=1}^{J_1} c_j (\sum_{k=1}^{M_j}\sqrt{d_{jk}}\lVert \hat\beta_{jk}\rVert)^{\gamma} \notag\\
&\le &2\lambda_n\sum_{j=1}^{J_1} \left(c_j (\sum_{k=1}^{M_j}\sqrt{d_{jk}}\lVert \beta_{0jk}\rVert)^{\gamma-1}\sum_{k=1}^{M_j}\sqrt{d_{jk}}\lvert\lVert \beta_{0jk}\rVert-\lVert \hat\beta_{jk}\rVert\rvert\right) \notag \\
& \le &2\lambda_n\eta_n\sum_{j=1}^{J_1}\sum_{k=1}^{M_j}\sqrt{d_{jk}}\lvert\lVert \beta_{0jk}\rVert-\lVert \hat\beta_{jk}\rVert\rvert \notag\\
&\le & 2\lambda_n\eta_n\lVert \hat\beta_B-\beta_{0B}\rVert.
\end{eqnarray}
Following from (\ref{eqn1}) and (\ref{ineq4}), we have
\begin{equation}
\label{ineq5}
  \frac{1}{2n}\lVert \eta_B\rVert^2-\frac{1}{n}\zeta^{\prime}\eta_B\le 2\lambda_n\eta_n\lVert \hat\beta_B-\beta_{0B}\rVert.
\end{equation}
Let $\zeta_B$ be the projection of $\zeta$ to the span of $X_B$, i.e., $\zeta_B=X_B(X_B^{\prime}X_B)^{-1}X_B^{\prime}\zeta$. We have
\begin{equation*}
  \zeta^{\prime}\eta_B=\zeta^{\prime}X_B(\hat\beta_B-\beta_{0B})=\left\{(X_B^{\prime}X_B)^{-1/2}X_B^{\prime}\zeta \right\}^{\prime}\left\{(X_B^{\prime}X_B)^{1/2}(\hat\beta_B-\beta_{0B}) \right\}.
\end{equation*}
Therefore, by the Cauchy-Schwarz inequality,
\begin{equation}
\label{ineq6}
  \lvert \zeta^{\prime}\eta_B\rvert\le\lVert\zeta_B\rVert\cdot\lVert\eta_B\rVert\le\lVert\zeta_B\rVert^2+\frac{1}{4}\lVert\eta_B\rVert.
\end{equation}
Combining (\ref{ineq5}) and (\ref{ineq6}),
\begin{equation}
\label{ineq7}
  \frac{1}{2n}\lVert \eta_B\rVert^2\le \frac{2}{n}\lVert \zeta_B\rVert^2+4\lambda_n\zeta_n\lVert \hat\beta_B-\beta_{0B}\rVert.
\end{equation}
By the SRC condition (A3),$n^{-1}\lVert \eta_B\rVert^2\ge c_*\lVert \hat\beta_B-\beta_{0B}\rVert^2$. Thus (\ref{ineq7}) implies
\begin{equation*}
  \frac{c_*}{2}\lVert \hat\beta_B-\beta_{0B}\rVert^2\le \frac{2}{n}\lVert \zeta_B\rVert^2 + \frac{4*(4\lambda_n\eta_n)^2}{c_*}+\frac{c_*}{4}\lVert \hat\beta_B-\beta_{0B}\rVert^2.
\end{equation*}
It follows that
\begin{equation}
\label{ineq8}
  \lVert \hat\beta_B-\beta_{0B}\rVert^2\le\frac{8\lVert \zeta_B\rVert^2}{nc_*}+\frac{(16\lambda_n\eta_n)^2}{c_*^2}.
\end{equation}
Now
\begin{eqnarray*}
  \lVert \zeta_B\rVert^2&=&\lVert(X_B^{\prime}X_B)^{-1/2}X_B^{\prime}\zeta\rVert^2\le \frac{1}{nc_*} \lVert X_B^{\prime}\zeta\rVert^2 \\
  & \le & \frac{1}{nc_*}\quad\quad\underaccent{A:|A|\le q^*}{\mathrm{max}}\quad\lVert X_A^{\prime}\zeta\rVert^2 \le \frac{1}{c_*}  \quad\underaccent{m}{\mathrm{max}}\quad\quad\underaccent{A^m:|A^m|\le q^{m*}}{\mathrm{max}}\quad\quad\frac{1}{n^m}\lVert X_{\omega A^m}^{m\prime}\zeta^m\rVert^2.
\end{eqnarray*}
We have
\begin{equation*}
  \underaccent{A^m:|A^m|\le q^{m*}}{\mathrm{max}}\quad\quad\lVert X_{\omega A^m}^{m\prime}\zeta^m\rVert^2=\quad\quad\underaccent{A^m:|A^m|\le q^{m*}}{\mathrm{max}}\quad\quad\sum_{j\in A^m} \lvert X_{\omega j}^{m\prime}\zeta^m\rvert^2 \le q^{m*}\quad\underaccent{1\le j\le d}{\mathrm{max}}\quad\lvert X_{\omega j}^{m\prime}\zeta^m\rvert^2.
\end{equation*}
By Lemma \ref{lemma0},
\begin{equation*}
  \underaccent{1\le j\le d}{\mathrm{max}}\quad\lvert X_{\omega j}^{m\prime}\zeta^m\rvert^2=n^m \quad\underaccent{1\le j\le d}{\mathrm{max}}\quad\lvert (n^m)^{-\frac{1}{2}}X_{\omega j}^{m\prime}\zeta^m\rvert^2=O_p(n^m\mathrm{log}d).
\end{equation*}
Therefore,
\begin{equation}
  \lVert \zeta_B\rVert^2=O_p(q^*\mathrm{log}d). \label{ineq9}
\end{equation}
The result follows from (\ref{ineq8}) and (\ref{ineq9}).
\end{proof}

\begin{proof}[Proof of Theorem~\ref{theorem2}]
(i) We prove the first part of Theorem~\ref{theorem2}.

Define $\tilde\beta$ by
\begin{equation*}
  \tilde\beta_j =
\begin{cases}
 \hat\beta_j, & (j \in B_1) \\
 0, & (j \in B_2)
\end{cases}.
\end{equation*}
The Karush-Kuhn-Tucker condition for~(\ref{update2}) implies that
\begin{eqnarray*}
  \frac{1}{n}(Y-X\hat\beta)^{\prime}X_{jl} = \gamma c_j\lambda_n \left( \sum_{k=1}^{M_j} \sqrt{d_{jk}} \lVert \hat\beta_{jk} \rVert \right)^{\gamma-1} \sqrt{d_{jl}}\frac{\hat\beta_{jl}}{\lVert \hat\beta_{jl} \rVert} I(\lVert \hat\beta_{jl}\rVert\ne 0),
\end{eqnarray*}
where $X_{jl}$ is the sub-matrix corresponding to the $j$th gene of the $l$th subtype.

Since $(\hat\beta_{jl}-\tilde\beta_{jl})^{\prime}\frac{\hat\beta_{jl}}{\lVert \hat\beta_{jl} \rVert}=\lVert \hat\beta_{jl} \rVert I(\lVert\beta_{0jl}\rVert=0)$, we have
\begin{eqnarray*}
  \frac{1}{n}(Y-X\hat\beta)^{\prime}X(\hat\beta-\tilde\beta)&=& \sum_{\{j:\lVert\beta_{0jl}\rVert=0\}\ni j}\gamma c_j \lambda_n \left( \sum_{k=1}^{M_j} \sqrt{d_{jk}} \lVert \hat\beta_{jk} \rVert \right)^{\gamma-1} \sqrt{d_{jl}} \lVert \hat\beta_{jl} \rVert \\
   &=& \sum_{\{j:\lVert\beta_{0jl}\rVert=0\}\ni j} \gamma c_j \lambda_n \left( \sum_{k=1}^{M_j} \sqrt{d_{jk}} \lVert \hat\beta_{jk} \rVert \right)^{\gamma-1} \sqrt{d_{jl}}\left( \lVert \hat\beta_{jl} \rVert - \lVert \tilde\beta_{jl} \rVert \right) \\
   &=& \gamma\lambda_n\sum_{j=1}^J  c_j \left( \sum_{k=1}^{M_j} \sqrt{d_{jk}} \lVert \hat\beta_{jk} \rVert \right)^{\gamma-1}\cdot\left\{ \left( \sum_{l=1}^{M_j}\sqrt{d_{jl}} \lVert \hat\beta_{jl} \rVert \right) - \left( \sum_{l=1}^{M_j}\sqrt{d_{jl}} \lVert \tilde\beta_{jl} \rVert \right) \right\}.
\end{eqnarray*}

Since $\gamma b^{\gamma-1}(b-a)\le b^{\gamma} - a^{\gamma}$ for $0\le a\le b$, for $j\le J_1$,
we have
\begin{eqnarray*}
& \gamma\lambda_n\sum_{j=1}^{J_1} c_j \left( \sum_{k=1}^{M_j} \sqrt{d_{jk}} \lVert \hat\beta_{jk} \rVert \right)^{\gamma-1} \left\{ \left( \sum_{l=1}^{M_j}\sqrt{d_{jl}} \lVert \hat\beta_{jl} \rVert \right) - \left( \sum_{l=1}^{M_j}\sqrt{d_{jl}} \lVert \tilde\beta_{jl} \rVert \right) \right\} \\
& \le \lambda_n \sum_{j=1}^{J_1} c_j \left\{ \left( \sum_{l=1}^{M_j}\sqrt{d_{jl}} \lVert \hat\beta_{jl} \rVert \right)^{\gamma} - \left( \sum_{l=1}^{M_j}\sqrt{d_{jl}} \lVert \tilde\beta_{jl} \rVert \right)^{\gamma} \right\}.
\end{eqnarray*}

Since $\left( \sum_{l=1}^{M_j}\sqrt{d_{jl}} \lVert \hat\beta_{jl} \rVert \right)=0$ for $j>J_1$, this implies that
\begin{eqnarray}
\label{A1}
\frac{1}{n} |(Y-X\hat\beta)^{\prime}X(\hat\beta-\tilde\beta)| &\le &  \lambda_n \sum_{j=1}^{J_1} c_j \left\{ \left( \sum_{l=1}^{M_j}\sqrt{d_{jl}} \lVert \hat\beta_{jl} \rVert \right)^{\gamma} - \left( \sum_{l=1}^{M_j}\sqrt{d_{jl}} \lVert \tilde\beta_{jl} \rVert \right)^{\gamma} \right\} \notag\\
 & &+\gamma\lambda_n \sum_{j=J_1+1}^J c_j \left( \sum_{k=1}^{M_j} \sqrt{d_{jk}} \lVert \hat\beta_{jk} \rVert \right)^{\gamma}.
\end{eqnarray}

By the definition of $\hat\beta$, we have
\begin{equation}
\frac{1}{2n}\lVert Y-X\hat\beta\rVert^2 + \lambda_n\sum_{j=1}^J c_j \left(\sum_{k=1}^{M_j}\sqrt{d_{jk}} \lVert \hat\beta_{jk}\rVert\right)^{\gamma} \le  \frac{1}{2n}\lVert Y-X\tilde\beta\rVert^2 +\lambda_n\sum_{j=1}^J c_j \left(\sum_{k=1}^{M_j}\sqrt{d_{jk}} \lVert \tilde\beta_{jk}\rVert\right)^{\gamma}. \notag
\end{equation}
Since $\left( \sum_{k=1}^{M_j}\sqrt{d_{jk}} \lVert \tilde\beta_{jk} \rVert \right)=0$ for $j>J_1$, by~(\ref{A1}),
\begin{eqnarray*}
 \frac{1}{n} |(Y-X\hat\beta)^{\prime}X(\hat\beta-\tilde\beta)| & + &(1-\gamma)\lambda_n\sum_{j=J_1+1}^J c_j \left(\sum_{k=1}^{M_j}\sqrt{d_{jk}} \lVert \hat\beta_{jk}\rVert\right)^{\gamma} \\
  &\le & \lambda_n \sum_{j=1}^J  c_j \left(\sum_{k=1}^{M_j}\sqrt{d_{jk}} \lVert \hat\beta_{jk}\rVert\right)^{\gamma} - \lambda_n \sum_{j=1}^J c_j \left(\sum_{k=1}^{M_j}\sqrt{d_{jk}} \lVert  \tilde\beta_{jk}\rVert\right)^{\gamma}  \\
  &\le & \frac{1}{2n}\lVert Y-X\tilde\beta\rVert^2 - \frac{1}{2n}\lVert Y-X\hat\beta\rVert^2 \\
  & = & \frac{1}{2n}\lVert X(\hat\beta -\tilde\beta) \rVert^2 - \frac{1}{n} (Y-X\hat\beta)^{\prime}X(\hat\beta-\tilde\beta).
\end{eqnarray*}
Thus, by SRC condition (A3), with $n^{-1}\lVert X(\hat\beta-\tilde\beta)\rVert^2\le c^*\lVert\hat\beta-\tilde\beta\rVert^2$, we have
\begin{equation*}
 (1-\gamma)\lambda_n \sum_{j=J_1+1}^J c_j \left(\sum_{k=1}^{M_j}\sqrt{d_{jk}} \lVert \hat\beta_{jk}\rVert\right)^{\gamma} \le \frac{1}{2n}\lVert X(\hat\beta-\tilde\beta)\rVert^2\le \frac{c^*}{2} \lVert\hat\beta_{B_2}\rVert^2\le \frac{c^*}{2} \lVert\hat\beta-\beta_0\rVert^2,
\end{equation*}
which implies, by Theorem~\ref{theorem1} (ii),
\begin{equation}
\label{A2}
(1-\gamma)\lambda_n \sum_{j=J_1+1}^J c_j \left(\sum_{k=1}^{M_j}\sqrt{d_{jk}} \lVert \hat\beta_{jk}\rVert\right)^{\gamma} \le \frac{c^*}{2} \lVert\hat\beta_{B_2}\rVert^2\le \frac{c^*}{2} \lVert\hat\beta-\beta_0\rVert^2 = O_p (\frac{\mathrm{log}d}{n}).
\end{equation}
We still need to find a lower bound of $\sum_{j=J_1+1}^J c_j \left(\sum_{k=1}^{M_j}\sqrt{d_{jk}} \lVert \hat\beta_{jk}\rVert\right)^{\gamma}$. Since $c_j \ge 1$ by assumption,
\begin{eqnarray}
\label{A3}
\sum_{\{j:\lVert\beta_{0jl}\rVert=0\}\ni j} c_j \left(\sum_{\{k:\lVert\beta_{0jk}\rVert=0\}\ni k}\sqrt{d_{jk}} \lVert \hat\beta_{jk}\rVert\right)^{\gamma} 
\ge \left(\sum_{(j,k):\lVert\beta_{0jk}\rVert=0}\sqrt{d_{jk}} \lVert \hat\beta_{jk}\rVert\right)^{\gamma}\ge \lVert\hat\beta_{B_2}\rVert^{\gamma}.
\end{eqnarray}
If $\lVert\hat\beta_{B_2}\rVert>0$, the combination of~(\ref{A2}) and~(\ref{A3}) yields
\begin{equation*}
 (1-\gamma)\lambda_n C_1 \le \frac{c^*}{2} \lVert\hat\beta_{B_2}\rVert^{2-\gamma}\le O_p\left(\left(\frac{\mathrm{log}d}{n}\right)^{\frac{2-\gamma}{2}} \right).
\end{equation*}
Since $\lambda_n(\mathrm{log}d/n)^{\gamma/2-1}\rightarrow \infty$ by assumption, this implies that
\begin{equation*}
  Pr(\lVert\hat\beta_{B_2}\rVert >0) \le Pr\left\{\lambda_n\left(\frac{\mathrm{log}d}{n}\right)^{\frac{\gamma}{2}-1} \le O_p(1)\right\}\rightarrow 0.
\end{equation*}

(ii) We now prove the second part. Let $h_n=n^{-1/2}$ define
\begin{equation*}
  V_{1n}(\bu)=L_n(\beta_0+h_n(\bu^{\prime},\mathbf{0}^{\prime})^{\prime})-L_n(\beta_0),
\end{equation*}
with $\mathbf{0}$ being the zero vector of dimension $|B_2|$. By (i), the following holds with large probability:
\begin{equation*}
 \hat\beta-\beta_0 =h_n(\hat\bu^{\prime},\mathbf{0}^{\prime})^{\prime}, \quad \hat\bu=\argmin \left\{ V_{1n}(\bu);\bu\in\mathbb{R}^{d_1}\right\}.
\end{equation*}
The function $V_{1n}(\bu),\bu\in\mathbb{R}^{d_1}$, can be written as
\begin{eqnarray*}
V_{1n}(\bu)&=&\left\{ -2h_n\bu^{\prime}X_1^{\prime}\epsilon + h_n^2\bu^{\prime}X_1^{\prime}X_1\bu \right\} +\lambda_n\sum_{j=1}^{J_1} c_j\left( \sum_{k=1}^{M_j} \sqrt{d_{jk}} \lVert \hat\beta_{jk} + h_n\bu_{jk}\rVert \right )^{\gamma}\\
& &-c_j\left( \sum_{k=1}^{M_j} \sqrt{d_{jk}} \lVert \hat\beta_{jk}\rVert \right )^{\gamma}\\
&=& T_{1n}(\bu)+T_{2n}(\bu).
\end{eqnarray*}
For the first term, we have
\begin{equation*}
 T_{1n}(\bu) \rightarrow_D -2\bu^{\prime}W+\bu^{\prime}\Sigma_1\bu.
\end{equation*}
For the second term,
\begin{equation*}
 T_{2n}(\bu) \rightarrow\gamma\lambda_0\sum_{j=1}^{J_1} c_j\left(\sum_{k=1}^{M_j}\sqrt{d_{jk}}\Vert\beta_{0jk}\Vert \right)^{\gamma-1} \sum_{k=1}^{M_j} \left\{ u_{jk}\frac{\beta_{0jk}}{\Vert\beta_{0jk}\Vert}I(\beta_{0jk}\ne0) +\Vert u_{jk}\Vert I(\beta_{0jk}=0)\right\}.
\end{equation*}

\end{proof}

\clearpage
\subsection{Additional numerical results}

\begin{table}[!htpb]
\caption{Simulation under the heterogeneity model:
unmatching rate=25$\%$ and nonzero regression coefficients under case 2. In each cell, the first row is number of true positives (standard deviation), and the second row is model size (standard deviation).}
\label{Tab:07}
\centering 
{%
\begin{tabular}{ccccc}
\hline
Correlation & \multicolumn{1}{c}{GLasso} & \multicolumn{3}{c}{Proposed} \\
\cline{2-5}
   &&\multicolumn{1}{c}{$\gamma=0.5$}&\multicolumn{1}{c}{$\gamma=0.7$}
   &\multicolumn{1}{c}{$\gamma=0.9$}\\
\hline
AR $\rho=$0.2&3.8(2.1)&2.4(4.2)&4.5(4.6)&5.4(3.4)\\
&32.7(17.5)&3.5(6.2)&7.9(7.7)&18.0(9.9)\\
AR $\rho=$0.5&6.0(2.4)&9.4(4.1)&10.2(3.2)&10.3(2.5)\\
&36.7(16.1)&13.3(5.9)&14.7(4.7)&20.0(7.7)\\
AR $\rho=$0.8&8.0(2.4)&11.7(0.9)&11.8(0.6)&11.7(0.8)\\
&46.9(18.5)&15.3(2.3)&15.6(3.0)&19.4(6.2)\\
Banded 1&3.8(2.1)&2.9(4.5)&4.4(4.5)&5.8(3.4)\\
&31.8(17.0)&4.3(6.6)&7.0(7.0)&18.1(9.9)\\
Banded 2&5.2(2.4)&9.2(4.5)&9.8(3.7)&9.5(3.0)\\
&32.7(18.8)&13.2(6.3)&14.4(5.3)&20.8(7.7)\\
Banded 3&5.7(2.5)&9.4(4.1)&9.7(3.4)&10.4(2.3)\\
&35.1(17.5)&13.0(5.6)&14.7(5.2)&22.8(8.7)\\
\hline
\end{tabular}%
}
\end{table}

\begin{table}[!tpb]
\caption{Simulation under the heterogeneity model:
unmatching rate=50$\%$ and nonzero regression coefficients under case 2. In each cell, the first row is number of true positives (standard deviation), and the second row is model size (standard deviation).}
\label{Tab:08}
\centering 
{%
\begin{tabular}{ccccc}
\hline
Correlation & \multicolumn{1}{c}{Lasso}  & \multicolumn{3}{c}{Proposed} \\
\cline{2-5}
   & &\multicolumn{1}{c}{$\gamma=0.5$}&\multicolumn{1}{c}{$\gamma=0.7$}
   &\multicolumn{1}{c}{$\gamma=0.9$}\\
\hline
AR $\rho=$0.2&3.7(2.0)&1.8(3.3)&3.8(3.8)&6.1(2.5)\\
&30.3(16.9)&2.8(5.4)&7.6(7.8)&21.4(8.2)\\
AR $\rho=$0.5&5.3(2.6)&8.0(3.8)&9.4(2.7)&9.5(1.7)\\
&33.5(20.1)&13.6(7.8)&17.4(6.3)&22.9(6.7)\\
AR $\rho=$0.8&8.2(2.3)&10.6(2.2)&10.9(1.0)&11.0(1.3)\\
&47.8(19.2)&18.1(5.5)&18.3(3.7)&21.8(6.6)\\
Banded 1&3.9(2.0)&1.9(3.0)&3.4(3.6)&5.8(3.1)\\
&31.2(18.5)&3.1(5.3)&6.3(7.0)&18.5(9.0)\\
Banded 2&5.1(2.4)&6.4(4.0)&8.6(3.0)&8.6(2.8)\\
&31.6(15.6)&10.4(7.4)&15.7(6.8)&21.7(8.8)\\
Banded 3&5.8(2.4)&8.3(3.5)&9.0(2.7)&9.4(2.1)\\
&36.2(18.5)&13.9(7.4)&15.6(5.4)&21.7(7.2)\\
\hline
\end{tabular}%
}
\end{table}

\begin{table}[!htpb]
\caption{Simulation under the homogeneity model: nonzero regression coefficients under case 1. In each cell, the first row is number of true positives (standard deviation), and the second row is model size (standard deviation).}
\label{Tab:01}
\centering 
{%
\begin{tabular}{ccccc}
\hline
Correlation & \multicolumn{1}{c}{GLasso}  & \multicolumn{3}{c}{Proposed} \\
\cline{2-5}
   &&\multicolumn{1}{c}{$\gamma=0.5$}&\multicolumn{1}{c}{$\gamma=0.7$}
   &\multicolumn{1}{c}{$\gamma=0.9$}\\
\hline
AR $\rho=$0.2&5.4(2.3)&9.5(4.0)&9.5(4.2)&9.1(3.8)\\
&37.6(17.2)&9.8(3.8)&10.5(4.3)&16.9(8.1)\\
AR $\rho=$0.5&7.1(2.7)&11.6(1.9)&11.6(1.6)&11.5(1.7)\\
&41.8(21.5)&11.8(1.4)&11.8(1.3)&15.2(4.4)\\
AR $\rho=$0.8&9.7(2.4)&11.9(0.4)&12.0(0.0)&12.0(0.0)\\
&48.3(17.8)&12.0(0.4)&12.1(0.6)&14.0(2.1)\\
Banded 1&5.1(2.0)&9.9(4.1)&10.1(3.8)&9.9(3.2)\\
&33.8(16.2)&10.2(3.9)&11.2(3.6)&17.9(6.9)\\
Banded 2&7.5(2.7)&12.0(0.0)&12.0(0.0)&11.6(1.1)\\
&45.5(19.3)&12.0(0.0)&12.3(1.1)&17.1(5.5)\\
Banded 3&8.3(2.3)&11.8(0.9)&11.9(0.8)&11.9(0.5)\\
&50.5(18.5)&11.8(0.9)&12.0(0.9)&15.9(3.7)\\
\hline
\end{tabular}%
}
\end{table}

\begin{table}[!tpb]
\caption{Simulation under the homogeneity model: nonzero regression coefficients under case 2. In each cell, the first row is number of true positives (standard deviation), and the second row is model size (standard deviation).}
\label{Tab:02}
\centering 
{%
\begin{tabular}{ccccc}
\hline
Correlation & \multicolumn{1}{c}{GLasso}  & \multicolumn{3}{c}{Proposed} \\
\cline{2-5}
   &&\multicolumn{1}{c}{$\gamma=0.5$}&
   \multicolumn{1}{c}{$\gamma=0.7$}&\multicolumn{1}{c}{$\gamma=0.9$}\\
\hline
AR $\rho=$0.2&4.1(2.4)&6.0(5.1)&6.9(4.8)&7.9(3.6)\\
&33.0(18.8)&6.2(5.1)&8.9(5.8)&19.9(7.7)\\
AR $\rho=$0.5&6.5(2.5)&11.2(2.3)&11.4(1.7)&11.3(1.6)\\
&39.2(18.2)&11.6(2.1)&12.4(2.0)&18.3(6.5)\\
AR $\rho=$0.8&8.2(2.6)&11.9(0.4)&11.9(0.4)&11.9(0.6)\\
&46.9(21.9)&12.0(0.6)&12.4(1.5)&16.4(5.8)\\
Banded 1&4.0(2.2)&5.3(5.3)&7.0(4.7)&7.8(3.5)\\
&33.5(16.2)&5.5(5.3)&8.1(5.3)&19.6(7.9)\\
Banded 2&6.0(2.2)&10.6(3.6)&11.1(2.5)&11.2(1.8)\\
&38.1(16.8)&10.9(3.2)&12.8(3.6)&19.1(6.2)\\
Banded 3&6.5(2.1)&11.4(1.9)&11.2(2.0)&11.3(2.1)\\
&39.9(16.8)&11.7(1.3)&12.5(2.1)&19.3(5.7)\\
\hline
\end{tabular}%
}
\end{table}

\begin{table}[!tpb]
\caption{SNP-level estimates using the proposed approach: DLBCL.}
\label{Tab:02}
\centering 
{%
\begin{tabular}{lclclclc}
\hline
SNP & Est & SNP & Est & SNP & Est & SNP & Est \\ \hline
ALOX15B\_01	&	0.00847	&	LMAN1\_02	&	-0.00386	&	MCP\_02	&	-0.00366	&	 NCF4\_42	&	-0.00429	\\
ALOX15B\_03	&	0.00238	&	LMAN1\_04	&	-0.00017	&	MCP\_04	&	0.01080	&	 NCF4\_43	&	-0.00424	\\
ALOX15B\_04	&	0.00397	&	LMAN1\_05	&	0.00876	&	MCP\_05	&	-0.00116	&	 NCF4\_44	&	0.00086	\\
ALOX15B\_06	&	0.00490	&	LMAN1\_06	&	-0.00343	&	MCP\_06	&	0.00242	&	 NCF4\_45	&	-0.00098	\\
ALOX15B\_07	&	0.00288	&	LMAN1\_07	&	0.00562	&	MCP\_07	&	0.00031	&	 NCF4\_46	&	-0.00055	\\
ALOX5\_01	&	0.00357	&	LMAN1\_08	&	0.00562	&	MCP\_08	&	-0.00511	&	 NCF4\_49	&	0.00130	\\
ALOX5\_41	&	0.00559	&	LMAN1\_09	&	0.00645	&	MEFV\_01	&	0.00544	&	 SERPINB3\_01	&	-0.00410	\\
ALOX5\_42	&	0.00033	&	MBP\_02	&	-0.00334	&	MEFV\_02	&	0.01260	&	 SERPINB3\_02	&	0.00262	\\
ALOX5\_43	&	-0.00611	&	MBP\_03	&	0.00084	&	MEFV\_03	&	-0.00468	&	 SERPINB3\_05	&	0.00489	\\
ALOX5\_44	&	0.00825	&	MBP\_04	&	0.00013	&	MEFV\_04	&	0.00699	&	 SERPINB3\_06	&	0.00177	\\
ALOX5\_45	&	0.00357	&	MBP\_05	&	0.00058	&	MEFV\_05	&	-0.00081	&	 STAT4\_08	&	-0.00873	\\
ALOX5\_46	&	-0.00093	&	MBP\_06	&	-0.00052	&	MIF\_01	&	0.00760	&	 STAT4\_09	&	0.00045	\\
ALOX5\_47	&	0.00428	&	MBP\_07	&	0.00058	&	MIF\_01\_2	&	0.00760	&	 STAT4\_10	&	0.00004	\\
ALOX5\_48	&	-0.00018	&	MBP\_08	&	0.00171	&	MIF\_14	&	-0.00358	&	 STAT4\_11	&	0.00485	\\
ALOX5\_49	&	0.00119	&	MBP\_09	&	-0.00075	&	MIF\_15	&	-0.00805	&	 STAT4\_12	&	-0.00138	\\
ALOX5\_51	&	0.00556	&	MBP\_10	&	-0.00100	&	MIF\_16	&	-0.00861	&	 STAT4\_13	&	-0.00156	\\
ALOX5\_52	&	0.00349	&	MBP\_12	&	-0.00149	&	MIF\_18	&	0.00750	&	 STAT4\_14	&	0.00097	\\
ALOX5\_53	&	0.00113	&	MBP\_13	&	0.00118	&	MIF\_19	&	0.00431	&	STAT4\_16	 &	-0.00500	\\
ALOX5\_54	&	0.01233	&	MBP\_14	&	0.00006	&	MIF\_20	&	0.00473	&	STAT4\_17	 &	-0.00605	\\
ALOX5\_55	&	0.00042	&	MBP\_15	&	0.00003	&	MIF\_21	&	0.00865	&	STAT4\_18	 &	0.00191	\\
CSF2\_02	&	0.01655	&	MBP\_16	&	0.00123	&	MIF\_22	&	-0.00365	&	 STAT4\_19	&	-0.00114	\\
DEFB1\_01	&	-0.00675	&	MBP\_17	&	0.00021	&	MIF\_23	&	-0.00232	&	 STAT4\_21	&	0.00118	\\
DEFB1\_02	&	-0.01199	&	MBP\_18	&	0.00032	&	MUC6\_01	&	-0.00386	&	 STAT4\_23	&	-0.00751	\\
DEFB1\_03	&	-0.00798	&	MBP\_19	&	-0.00094	&	MUC6\_02	&	-0.00825	 &	STAT4\_24	&	0.00048	\\
DEFB1\_04	&	0.00948	&	MBP\_20	&	-0.00028	&	MUC6\_03	&	-0.00041	&	 STAT4\_25	&	0.00687	\\
DEFB1\_05	&	-0.01192	&	MBP\_21	&	-0.00084	&	MUC6\_04	&	-0.00145	 &	STAT4\_29	&	0.00242	\\
DEFB1\_06	&	-0.00488	&	MBP\_22	&	-0.00023	&	MUC6\_07	&	-0.00329	 &	STAT4\_30	&	0.00274	\\
DEFB1\_07	&	-0.00561	&	MBP\_23	&	-0.00159	&	MUC6\_08	&	-0.00569	 &	STAT4\_31	&	-0.00531	\\
DEFB1\_08	&	-0.00195	&	MBP\_24	&	-0.00097	&	MUC6\_09	&	-0.00362	 &	STAT4\_33	&	-0.00654	\\
DEFB1\_09	&	0.00809	&	MBP\_25	&	0.00063	&	MUC6\_10	&	-0.01415	&	 STAT4\_34	&	0.00814	\\
DEFB1\_11	&	-0.00332	&	MBP\_26	&	0.00116	&	MUC6\_13	&	0.01833	&	 STAT4\_35	&	0.00020	\\
DEFB1\_12	&	-0.00140	&	MBP\_27	&	-0.00045	&	MUC6\_14	&	-0.01229	 &	STAT4\_36	&	0.00369	\\
DEFB1\_13	&	-0.01077	&	MBP\_28	&	0.00022	&	NCF4\_12	&	-0.00118	&	 STAT4\_37	&	0.00462	\\
IL10\_01	&	-0.00009	&	MBP\_29	&	-0.00214	&	NCF4\_18	&	-0.00292	 &	STAT4\_38	&	0.00229	\\
IL10\_02	&	-0.00009	&	MBP\_30	&	0.00030	&	NCF4\_33	&	-0.00254	&	 STAT4\_39	&	-0.00357	\\
IL10\_03	&	0.00000	&	MBP\_31	&	-0.00031	&	NCF4\_34	&	-0.00077	&	 STAT4\_41	&	0.00203	\\
IL10\_06	&	0.00001	&	MBP\_32	&	-0.00140	&	NCF4\_35	&	-0.00932	&	 STAT4\_42	&	0.00206	\\
IL10\_07	&	0.00001	&	MBP\_33	&	-0.00060	&	NCF4\_36	&	-0.00220	&	 STAT4\_43	&	0.00346	\\
IL10\_17	&	0.00000	&	MBP\_34	&	-0.00040	&	NCF4\_37	&	-0.00483	&	 STAT4\_44	&	0.00711	\\
IL10\_17\_2	&	0.00000	&	MBP\_35	&	0.00255	&	NCF4\_38	&	-0.00229	&	 STAT4\_45	&	0.00061	\\
LMAN1\_01	&	0.00918	&	MBP\_36	&	-0.00051	&	NCF4\_39	&	0.00575	&	 STAT4\_46	&	0.00687	\\
\hline
\end{tabular}%
}
\end{table}

\begin{table}[!tpb]
\caption{SNP-level estimates using the proposed approach: FL.}
\label{Tab:02}
\centering 
{%
\begin{tabular}{lclclclc}
\hline
SNP & Est & SNP & Est & SNP & Est & SNP & Est \\ \hline
ALOX12\_06	&	0.01078	&	IRAK2\_20	&	0.00404	&	MBP\_23	&	0.00021	&	 NCF4\_46	&	0.00449	\\
ALOX12\_07	&	0.01315	&	IRAK2\_21	&	0.00257	&	MBP\_24	&	-0.00002	&	 NCF4\_49	&	0.00383	\\
ALOX12\_09	&	0.00692	&	IRAK2\_22	&	0.00315	&	MBP\_25	&	0.00075	&	 PTK9L\_01	&	0.00275	\\
CLCA1\_01	&	-0.00360	&	IRAK2\_23	&	0.00319	&	MBP\_26	&	-0.00051	&	 PTK9L\_02	&	0.01068	\\
CLCA1\_02	&	0.00514	&	IRAK2\_24	&	-0.00416	&	MBP\_27	&	-0.00013	&	 PTK9L\_03	&	0.00498	\\
CLCA1\_03	&	-0.00255	&	IRAK2\_25	&	-0.00099	&	MBP\_28	&	0.00049	&	 SOD3\_27	&	0.00399	\\
CLCA1\_04	&	-0.00307	&	IRAK2\_26	&	-0.00563	&	MBP\_29	&	-0.00156	 &	STAT4\_08	&	-0.00143	\\
CLCA1\_05	&	0.00312	&	IRAK2\_27	&	0.00227	&	MBP\_30	&	0.00080	&	 STAT4\_09	&	0.00123	\\
CLCA1\_06	&	-0.00192	&	IRAK2\_28	&	0.00690	&	MBP\_31	&	-0.00148	&	 STAT4\_10	&	0.00068	\\
CLCA1\_07	&	0.00004	&	LIG4\_02	&	-0.01321	&	MBP\_32	&	0.00037	&	 STAT4\_11	&	-0.00025	\\
CLCA1\_08	&	0.00606	&	LMAN1\_01	&	-0.00644	&	MBP\_33	&	-0.00061	&	 STAT4\_12	&	0.00048	\\
CLCA1\_09	&	-0.00514	&	LMAN1\_02	&	0.00382	&	MBP\_34	&	-0.00097	&	 STAT4\_13	&	0.00096	\\
CLCA1\_10	&	0.00190	&	LMAN1\_04	&	-0.00467	&	MBP\_35	&	-0.00030	&	 STAT4\_14	&	0.00050	\\
CLCA1\_11	&	-0.00208	&	LMAN1\_05	&	0.00560	&	MBP\_36	&	0.00067	&	 STAT4\_16	&	0.00027	\\
CLCA1\_12	&	0.00447	&	LMAN1\_06	&	0.00355	&	MIF\_01	&	-0.00031	&	 STAT4\_17	&	-0.00088	\\
CLCA1\_13	&	0.00266	&	LMAN1\_07	&	-0.01116	&	MIF\_01\_2	&	-0.00031	 &	STAT4\_18	&	0.00121	\\
CLCA1\_15	&	-0.00221	&	LMAN1\_08	&	-0.01116	&	MIF\_14	&	0.00018	&	 STAT4\_19	&	0.00057	\\
CLCA1\_16	&	0.00426	&	LMAN1\_09	&	-0.00280	&	MIF\_15	&	0.00047	&	 STAT4\_21	&	0.00074	\\
CLCA1\_17	&	-0.00246	&	MBP\_02	&	0.00000	&	MIF\_16	&	0.00045	&	 STAT4\_23	&	-0.00026	\\
CLCA1\_18	&	-0.00427	&	MBP\_03	&	0.00016	&	MIF\_18	&	-0.00030	&	 STAT4\_24	&	0.00117	\\
CLCA1\_19	&	-0.00243	&	MBP\_04	&	0.00065	&	MIF\_19	&	0.00009	&	 STAT4\_25	&	-0.00119	\\
CLCA1\_20	&	0.00698	&	MBP\_05	&	0.00047	&	MIF\_20	&	0.00015	&	STAT4\_29	 &	0.00201	\\
CLCA1\_21	&	0.00241	&	MBP\_06	&	-0.00001	&	MIF\_21	&	-0.00031	&	 STAT4\_30	&	-0.00002	\\
CLCA1\_22	&	0.00865	&	MBP\_07	&	0.00016	&	MIF\_22	&	-0.00049	&	 STAT4\_31	&	-0.00118	\\
CLCA1\_23	&	-0.00595	&	MBP\_08	&	0.00008	&	MIF\_23	&	0.00021	&	 STAT4\_33	&	-0.00121	\\
IL17C\_01	&	0.01724	&	MBP\_09	&	0.00124	&	NCF4\_12	&	-0.00300	&	 STAT4\_34	&	-0.00174	\\
IRAK2\_01	&	-0.00111	&	MBP\_10	&	0.00053	&	NCF4\_18	&	0.00226	&	 STAT4\_35	&	0.00183	\\
IRAK2\_02	&	-0.00289	&	MBP\_12	&	-0.00062	&	NCF4\_33	&	-0.00048	 &	STAT4\_36	&	0.00046	\\
IRAK2\_10	&	0.00704	&	MBP\_13	&	0.00084	&	NCF4\_34	&	-0.00323	&	 STAT4\_37	&	-0.00041	\\
IRAK2\_11	&	0.00483	&	MBP\_14	&	-0.00016	&	NCF4\_35	&	-0.00107	&	 STAT4\_38	&	0.00059	\\
IRAK2\_12	&	-0.00195	&	MBP\_15	&	-0.00066	&	NCF4\_36	&	-0.00582	 &	STAT4\_39	&	0.00205	\\
IRAK2\_13	&	-0.00178	&	MBP\_16	&	0.00070	&	NCF4\_37	&	0.00349	&	 STAT4\_41	&	0.00105	\\
IRAK2\_14	&	-0.00234	&	MBP\_17	&	0.00017	&	NCF4\_38	&	-0.00065	&	 STAT4\_42	&	0.00011	\\
IRAK2\_15	&	0.00474	&	MBP\_18	&	-0.00010	&	NCF4\_39	&	0.00466	&	 STAT4\_43	&	0.00055	\\
IRAK2\_16	&	-0.00089	&	MBP\_19	&	-0.00121	&	NCF4\_42	&	0.00165	&	 STAT4\_44	&	0.00148	\\
IRAK2\_17	&	-0.00261	&	MBP\_20	&	0.00020	&	NCF4\_43	&	0.00013	&	 STAT4\_45	&	0.00062	\\
IRAK2\_18	&	-0.00047	&	MBP\_21	&	0.00101	&	NCF4\_44	&	-0.00265	&	 STAT4\_46	&	-0.00135	\\
IRAK2\_19	&	-0.00278	&	MBP\_22	&	-0.00004	&	NCF4\_45	&	-0.00420	 &		&		\\
\hline
\end{tabular}%
}
\end{table}

\begin{table}[!tpb]
\caption{SNP-level estimates using the proposed approach: CLL/SLL.}
\label{Tab:02}
\centering 
{%
\begin{tabular}{lclclclc}
\hline
SNP & Est & SNP & Est & SNP & Est & SNP & Est \\ \hline
ALOX5\_01	&	0.00540	&	IRAK2\_13	&	-0.00296	&	MBP\_14	&	-0.00157	&	 STAT4\_12	&	-0.00348	\\
ALOX5\_41	&	-0.00105	&	IRAK2\_14	&	-0.00353	&	MBP\_15	&	-0.01136	 &	STAT4\_13	&	-0.00234	\\
ALOX5\_42	&	0.00280	&	IRAK2\_15	&	0.00202	&	MBP\_16	&	-0.00984	&	 STAT4\_14	&	-0.00177	\\
ALOX5\_43	&	0.00393	&	IRAK2\_16	&	-0.00217	&	MBP\_17	&	0.00380	&	 STAT4\_16	&	0.00067	\\
ALOX5\_44	&	0.00407	&	IRAK2\_17	&	0.00271	&	MBP\_18	&	-0.01136	&	 STAT4\_17	&	-0.00307	\\
ALOX5\_45	&	0.00504	&	IRAK2\_18	&	0.00186	&	MBP\_19	&	-0.00883	&	 STAT4\_18	&	-0.00050	\\
ALOX5\_46	&	-0.00096	&	IRAK2\_19	&	0.00266	&	MBP\_20	&	-0.00373	&	 STAT4\_19	&	-0.00249	\\
ALOX5\_47	&	0.00234	&	IRAK2\_20	&	-0.00165	&	MBP\_21	&	0.00081	&	 STAT4\_21	&	-0.00214	\\
ALOX5\_48	&	0.00200	&	IRAK2\_21	&	-0.00307	&	MBP\_22	&	-0.00438	&	 STAT4\_23	&	-0.00306	\\
ALOX5\_49	&	0.00334	&	IRAK2\_22	&	-0.00025	&	MBP\_23	&	0.00720	&	 STAT4\_24	&	-0.00144	\\
ALOX5\_51	&	-0.00205	&	IRAK2\_23	&	0.00186	&	MBP\_24	&	0.00087	&	 STAT4\_25	&	0.00143	\\
ALOX5\_52	&	-0.00043	&	IRAK2\_24	&	0.00226	&	MBP\_25	&	-0.00234	&	 STAT4\_29	&	-0.00179	\\
ALOX5\_53	&	0.00556	&	IRAK2\_25	&	-0.00147	&	MBP\_26	&	0.00109	&	 STAT4\_30	&	-0.00187	\\
ALOX5\_54	&	0.00552	&	IRAK2\_26	&	-0.00364	&	MBP\_27	&	0.00462	&	 STAT4\_31	&	-0.00045	\\
ALOX5\_55	&	-0.00058	&	IRAK2\_27	&	0.00008	&	MBP\_28	&	0.00801	&	 STAT4\_33	&	-0.00056	\\
IL10\_01	&	-0.00305	&	IRAK2\_28	&	-0.00263	&	MBP\_29	&	-0.00328	 &	STAT4\_34	&	-0.00119	\\
IL10\_02	&	-0.00421	&	MBP\_02	&	0.00696	&	MBP\_30	&	0.00278	&	 STAT4\_35	&	0.00023	\\
IL10\_03	&	-0.00313	&	MBP\_03	&	0.01177	&	MBP\_31	&	-0.00510	&	 STAT4\_36	&	-0.00246	\\
IL10\_06	&	-0.00212	&	MBP\_04	&	0.01112	&	MBP\_32	&	0.00658	&	 STAT4\_37	&	0.00073	\\
IL10\_07	&	-0.00313	&	MBP\_05	&	0.00279	&	MBP\_33	&	-0.00924	&	 STAT4\_38	&	-0.00174	\\
IL10\_17	&	-0.01264	&	MBP\_06	&	0.01161	&	MBP\_34	&	0.00411	&	 STAT4\_39	&	-0.00032	\\
IL10\_17\_2	&	-0.01264	&	MBP\_07	&	-0.00203	&	MBP\_35	&	0.00028	&	 STAT4\_41	&	-0.00224	\\
IRAK2\_01	&	0.00421	&	MBP\_08	&	-0.00272	&	MBP\_36	&	0.00618	&	 STAT4\_42	&	-0.00210	\\
IRAK2\_02	&	0.00072	&	MBP\_09	&	0.00394	&	STAT4\_08	&	-0.00153	&	 STAT4\_43	&	-0.00111	\\
IRAK2\_10	&	-0.00098	&	MBP\_10	&	-0.00098	&	STAT4\_09	&	-0.00124	 &	STAT4\_44	&	-0.00281	\\
IRAK2\_11	&	-0.00096	&	MBP\_12	&	0.00964	&	STAT4\_10	&	-0.00123	&	 STAT4\_45	&	0.00067	\\
IRAK2\_12	&	0.00107	&	MBP\_13	&	0.00312	&	STAT4\_11	&	-0.00205	&	 STAT4\_46	&	0.00143	\\
\hline
\end{tabular}%
}
\end{table}

\begin{center}
\scriptsize
\begin{longtable}{ccccccc}
\caption[Analysis of the NHL data using group Lasso]{Analysis of the NHL data using group Lasso for each subtype separately: $L_2$-norm of estimate for a specific gene; OOI: observed occurrence index.} \label{s_glasso} \\
\hline \multicolumn{1}{c}{Gene} & \multicolumn{2}{c}{DLBCL} & \multicolumn{2}{c}{FL} & \multicolumn{2}{c}{CLL/SLL} \\
\cline{2-7}
   & $L_2$-norm & OOI & $L_2$-norm & OOI & $L_2$-norm &OOI\\
\hline
\endfirsthead

\multicolumn{7}{c}%
{{\bfseries \tablename\ \thetable{} -- continued from previous page}} \\
\hline \multicolumn{1}{c}{Gene} &
\multicolumn{2}{c}{DLBCL} &
\multicolumn{2}{c}{FL} &
\multicolumn{2}{c}{CLL/SLL}\\
\cline{2-7}
   & $L_2$ norm& OOI & $L_2$ norm & OOI & $L_2$ norm &OOI\\
\hline

\endhead

\hline \multicolumn{7}{r}{{Continued on next page}} \\ \hline
\endfoot

\hline
\endlastfoot
AHR	&		&		&	0.017	&	0.97	&		&		\\
ALOX12	&		&		&	0.019	&	1.00	&		&		\\
ALOX15B	&	0.012	&	1.00	&		&		&		&		\\
APEX1	&	0.000	&	0.33	&		&		&		&		\\
BAT5	&		&		&	0.003	&	0.33	&		&		\\
BHMT	&	0.003	&	0.74	&	0.001	&	0.28	&		&		\\
C1QA	&	0.010	&	0.91	&		&		&		&		\\
C1QB	&		&		&	0.019	&	0.90	&		&		\\
C1QG	&	0.003	&	0.46	&		&		&		&		\\
C1S	&		&		&	0.004	&	0.56	&		&		\\
C2	&		&		&		&		&	0.014	&	0.90	\\
C4BPA	&	0.001	&	0.25	&		&		&		&		\\
C5	&	0.010	&	0.93	&		&		&		&		\\
C8A	&		&		&	3E-04	&	0.36	&		&		\\
CASP10	&	0.010	&	0.97	&		&		&		&		\\
CCL13	&	0.004	&	0.84	&		&		&		&		\\
CCND1	&		&		&	0.003	&	0.46	&		&		\\
CCR1	&		&		&	0.008	&	0.79	&		&		\\
CCR8	&		&		&		&		&	0.018	&	0.76	\\
CENTA1	&		&		&		&		&	0.008	&	0.66	\\
CSF2	&	0.006	&	0.88	&		&		&	0.006	&	0.69	\\
CSF3	&	0.011	&	0.94	&		&		&		&		\\
CTNNB1	&		&		&	0.004	&	0.54	&		&		\\
CX3CR1	&	0.006	&	0.81	&		&		&		&		\\
CYP1A2	&		&		&	0.002	&	0.51	&		&		\\
CYP1B1	&	0.012	&	0.91	&	0.001	&	0.23	&		&		\\
CYP2C9	&		&		&	0.017	&	0.90	&		&		\\
DEF6	&		&		&	0.006	&	0.69	&		&		\\
DEFB1	&	0.002	&	0.70	&		&		&		&		\\
DHX33	&	0.015	&	1.00	&		&		&		&		\\
EPHX1	&	0.007	&	0.88	&	0.006	&	0.67	&		&		\\
ERCC2	&	0.010	&	0.97	&		&		&		&		\\
ERCC5	&		&		&	0.003	&	0.18	&		&		\\
GGH	&		&		&	0.001	&	0.31	&	0.003	&	0.45	\\
ICAM2	&	0.009	&	0.90	&		&		&		&		\\
ICAM4	&	0.004	&	0.86	&		&		&		&		\\
ICAM5	&		&		&	0.011	&	0.95	&		&		\\
IFNGR1	&		&		&		&		&	0.002	&	0.38	\\
IL10	&	0.002	&	0.61	&		&		&	0.011	&	0.86	\\
IL10RA	&	0.005	&	0.84	&		&		&		&		\\
IL15	&	0.007	&	0.96	&		&		&	0.005	&	0.90	\\
IL15RA	&		&		&	0.002	&	0.54	&		&		\\
IL17C	&		&		&	0.013	&	0.92	&		&		\\
IL3	&	0.004	&	0.84	&		&		&		&		\\
KLK6	&		&		&		&		&	0.006	&	0.38	\\
LEPR	&	0.007	&	0.90	&		&		&		&		\\
LIG4	&		&		&		&		&	0.000	&	0.17	\\
LMAN1	&		&		&	0.002	&	0.41	&		&		\\
LPO	&	0.005	&	0.61	&		&		&		&		\\
MASP2	&		&		&	0.004	&	0.62	&		&		\\
MCP	&	0.011	&	0.90	&		&		&		&		\\
MEFV	&	4E-04	&	0.38	&		&		&		&		\\
MIF	&	0.002	&	0.43	&		&		&		&		\\
MLH1	&	0.001	&	0.54	&		&		&		&		\\
MTHFD2	&	0.004	&	0.52	&		&		&		&		\\
MTR	&		&		&		&		&	0.001	&	0.28	\\
MTRR	&	0.014	&	1.00	&		&		&		&		\\
MUC6	&	0.008	&	0.96	&		&		&		&		\\
MUC7	&		&		&	0.005	&	0.77	&		&		\\
MYC	&		&		&		&		&	0.006	&	0.86	\\
T2	&		&		&		&		&	0.001	&	0.34	\\
NBS1	&	0.001	&	0.65	&		&		&		&		\\
NCF4	&		&		&	0.001	&	0.18	&		&		\\
NFKBIE	&	0.007	&	0.94	&		&		&	0.003	&	0.41	\\
NOS3	&	0.006	&	0.84	&		&		&		&		\\
OGG1	&	0.003	&	0.64	&	0.010	&	0.72	&		&		\\
PDCD10	&	0.001	&	0.52	&		&		&		&		\\
PFKFB2	&	0.005	&	0.87	&		&		&		&		\\
PPARG	&		&		&	0.005	&	0.38	&		&		\\
PRO1580	&		&		&		&		&	0.005	&	0.55	\\
PTK9	&	0.011	&	1.00	&		&		&		&		\\
PTK9L	&		&		&	0.018	&	0.92	&		&		\\
RAD23B	&	0.019	&	1.00	&		&		&		&		\\
RAG1	&	0.004	&	0.74	&		&		&		&		\\
SECTM1	&	0.001	&	0.54	&		&		&		&		\\
SELE	&	0.002	&	0.75	&		&		&	0.001	&	0.45	\\
SENP3	&	0.004	&	0.75	&		&		&		&		\\
SERPINB3	&	0.004	&	0.84	&	0.003	&	0.26	&		&		\\
SOCS4	&	0.009	&	0.93	&		&		&		&		\\
SOD3	&	3E-04	&	0.28	&	0.008	&	0.85	&		&		\\
STAT5B	&		&		&		&		&	0.002	&	0.52	\\
STAT6	&	0.002	&	0.57	&		&		&		&		\\
STK11	&		&		&		&		&	0.002	&	0.48	\\
TCN1	&	0.008	&	0.94	&		&		&		&		\\
TICAM1	&	0.001	&	0.59	&		&		&		&		\\
TLR1	&		&		&		&		&	0.019	&	0.90	\\
TLR9	&	0.002	&	0.42	&		&		&		&		\\
TNFRSF18	&		&		&	0.003	&	0.67	&		&		\\
TOLLIP	&	8E-05	&	0.29	&	0.005	&	0.54	&		&		\\
TP53	&		&		&	0.001	&	0.46	&		&		\\
TYK2	&	0.003	&	0.81	&		&		&		&		\\
WDHD1	&	0.015	&	1.00	&		&		&		&		\\
WRN	&		&		&	0.003	&	0.49	&		&		\\
XRCC1	&	0.004	&	0.70	&		&		&	9E-05	&	0.28	\\
XRCC3	&	4E-04	&	0.39	&		&		&		&		\\
XRCC4	&		&		&		&		&	0.010	&	0.76	\\
ZNF76	&		&		&		&		&	0.006	&	0.66	\\
ZP1	&		&		&		&		&	0.006	&	0.76	\\
\hline
\end{longtable}
\end{center}

\end{document}